\newtheorem{Theorem}{Theorem}
\newtheorem{Lemma}{Lemma}
\newcommand{\clo}{\overline}
\begin{document}

\begin{frontmatter}




\title{\textbf{The DtN nonreflecting boundary condition for multiple scattering problems in the half-plane}}

\author[]{Sebastian Acosta \corref{cor1}}
\ead{sebastian@math.byu.edu}

\author[]{Vianey Villamizar}
\ead{vianey@math.byu.edu}

\author[]{Bruce Malone}

\address{Department of Mathematics, Brigham Young University, Provo,
UT 84602, United States}

\cortext[cor1]{Corresponding author.
Tel: 1 801 361 9833.}

\begin{abstract}
The multiple-Dirichlet-to-Neumann (multiple-DtN) non-reflecting boundary condition is adapted to acoustic scattering from obstacles embedded in the half-plane. The multiple-DtN map is coupled with the method of images as an alternative model for multiple acoustic scattering in the presence of acoustically soft and hard plane boundaries. As opposed to the current practice of enclosing all obstacles with a large semicircular artificial boundary that contains portion of the plane boundary, the proposed technique uses small artificial circular boundaries that only enclose the immediate vicinity of each obstacle in the half-plane. The adapted multiple-DtN condition is simultaneously imposed in each of the artificial circular boundaries. As a result the computational effort is significantly reduced. A computationally advantageous boundary value problem is numerically solved with a finite difference method supported on boundary-fitted grids. Approximate solutions to problems involving two scatterers of arbitrary geometry are presented. The proposed numerical method is validated by comparing the approximate and exact far-field patterns for the scattering from a single and from two circular
obstacles in the half-plane.
\end{abstract}

\begin{keyword}
Half-plane scattering \sep Helmholtz equation \sep nonreflecting
boundary condition \sep Dirichlet-to-Neumann map

\end{keyword}

\end{frontmatter}


\section{Introduction} \label{Intro}

Wave scattering emerges in many applications such as sonar, radar,
antennas, seismic exploration, crack detection, satellite imaging,
and microscopy. Many of these scattering problems are more
realistically modeled by taking into account the presence of
infinite boundaries such as the ground surface in outdoor acoustics, radar and satellite imaging \cite{BertYoung01,CompAtmAc}, or the surface and bottom of the ocean in marine acoustics \cite{MarineAc,Fawcett02,UrickBook1983}. The analytical solutions for acoustic scattering in the half-plane or half-space can be found
only when the target conforms to simple shapes such as circles in
two dimensions, or spheres in three dimensions. These solutions can be constructed using the method of images so that the problem extends to multiple scattering in the full-space or full-plane. Then, modal expansions in separable coordinates systems or explicit evaluations of integral representations are available. For details, see \cite{BertYoung01,Twer1952}, \cite[Appendix A]{LeeKalli01}, or the book by Martin \cite{MartinBook} with its extensive reference list.

For arbitrarily shaped obstacles, a closed form solution is not
generally found, although some useful asymptotic approximations have
been constructed using integral representations in the low frequency
regime \cite{MartinBook,DassKlein01,DassKlein}. Hence, the scattering from an
obstacle of complex geometry embedded in the half-space generally
requires the application of numerical methods. One class of such methods is 
based on boundary integral formulations. We direct the reader to \cite[Chapters 5 and 6]{MartinBook} for an excellent overview and reference list of integral equation methods employed in multiple scattering theory. Some of the well-known advantages of these methods are the reduction in dimensionality (from volume to surface) of the domain of discretization, the automatic satisfaction of the radiation condition at infinity and of the boundary condition on the plane boundary for semi-infinite media. This is accomplished by using the correct Green's function as the integral kernel. However, these methods may become quite costly since they lead to dense matrices and the reduction of dimensionality is lost if the media contains inhomogeneities. Another important class of numerical methods for scattering problems are based on finite elements and finite differences. These volume discretization techniques, as opposed to boundary integral methods, are naturally suited for treating localized heterogeneities, nonlinearities and sources, as well as the presence of impenetrable obstacles. Their major drawback, however, is the handling of the unboundedness of the domain.

A great deal of research has been performed to formulate
appropriate absorbing boundary conditions in order to truncate the
physical domain. These boundary conditions should allow the outgoing
waves to leave the truncated domain without nonphysical
reflections. Some of these conditions include local absorbing
boundary conditions \cite{Bayliss01,Engquist01,
Kriegsmann87,Hagstrom98}, perfectly matched layers
\cite{Berenger01,Bermudez07,Collino98,Turkel-Yefet98}
and an exact nonreflecting boundary condition known as
Dirichlet-to-Neumann (DtN) map \cite{Keller01,Grote-Keller01,Givoli1999}. The
potential advantages and drawbacks of these conditions have been
extensively studied by Givoli \cite{GivoliReview,GivoliReview2,GivoliBook} and
Tsynkov \cite{Tsynkov1998}. Some are easier to implement, others are
more robust, while others may be computationally more efficient.
However, a general agreement in favor of one of these conditions
over the others has not been reached.

For the full-plane, the most common practice has been to enclose all scatterers with a single artificial circular or elliptical boundary and apply the absorbing boundary condition on it. For half-plane problems,
multiple scattering interactions between the plane boundary and the scatterers take place. For that reason, the artificial boundary typically consists of a semi-circular or elliptical artificial boundary enclosing all scatterers and a portion of the plane boundary.
This was the approach followed by Lee and Kallivokas \cite{LeeKalli01}, and Givoli and Vigdergauz \cite{GivVig1993} where excellent numerical results for the scattering from two-dimensional obstacles were obtained. For instance, an effort was made in
\cite{LeeKalli01} to extend the applicability of elliptically shaped
absorbing boundaries to half-plane problems. Its use rendered significant computational
savings for elongated obstacles near the plane boundary in comparison with
the use of semi-circular boundaries. However, if the obstacle is relatively far from the plane boundary or if the scatterer consists of several disjoint obstacles, the use of a
single artificial boundary (circular or elliptical) to enclose all the obstacles and a portion of the plane boundary will inevitably lead to a large computational domain.

An improvement can be made in terms of reducing the size of the computational domain. In fact, Grote and Kirsch \cite {Grote01} introduced the multiple-DtN map as a nonreflecting boundary condition for the scattering from
several obstacles embedded in the full-space or full-plane. Later, Acosta and Villamizar \cite{JCP2010} combined it with curvilinear coordinates and applied it to obstacles of arbitrary shape. Its definition requires the introduction of separate artificial boundaries, each one enclosing a different obstacle, reducing the computational region to a set of small sub-domains. Then, the multiple-DtN condition is  simultaneously imposed  on each artificial
boundary. The net effect of this condition is that propagating waves are allowed to leave the computational sub-domains without spurious reflections and simultaneously account for the wave interactions between the different sub-domains. As mentioned in \cite{Grote01}, neither local absorbing boundary conditions nor
perfectly matched layers in their current form successfully deal with such multiple
scattering interactions.

The purpose of the present work is to adapt the  multiple-DtN technique, described above  for scattering in the full-plane, to scattering problems in the half-plane. The main idea
can be conveniently illustrated by Fig. \ref{First}. Here, two domains corresponding to equivalent problems are depicted. In Fig. \ref{First}(a), the semi-infinite domain $\Omega$ of the
{\it original half-plane problem}, internally bounded  by $\mathcal{C}$ and lying above the plane boundary $\Gamma$, is shown. On the other hand in Fig. \ref{First}(b), the small annular domain $\Omega_{\text {int}}$ of the final problem, bounded internally by $\mathcal{C}$ and externally by the artificial circular boundary $\mathcal{B}$, is also shown. The proposed method based on an adaptation of the multiple-DtN technique requires: First, the construction of a new multiple-DtN condition for the half-plane which handles the interactions between the plane boundary and the scatterer. Secondly, the approximation of the solution of the unbounded original problem by numerically solving the equivalent final problem in the bounded domain $\Omega_{\text{int}}$ with the novel multiple-DtN condition imposed on the artificial circular boundary $\mathcal{B}$. This final problem will be called {\it half-plane multiple-DtN scattering problem} in this work.
\begin{figure}[!ht]
\begin{center}
\includegraphics[width=0.9\textwidth]{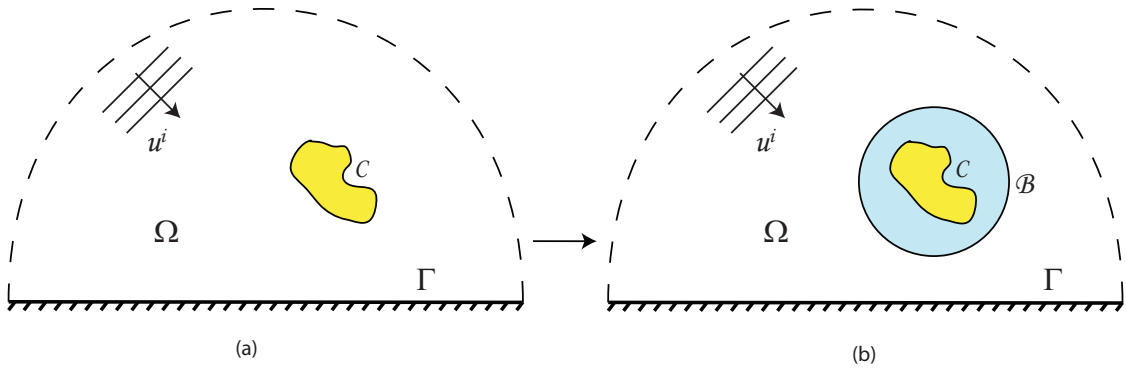}
\caption{(a) Original physical scattering problem, (b) Truncated domain where the new multiple-DtN condition will be imposed and the numerical approximations to the original problem will be calculated.}
\label{First}
\end{center}
\end{figure}

The derivation of the adapted condition involves several steps. To start,  consider the physical problem of scattering from a single obstacle in the half-plane with an acoustically soft or hard boundary condition on the plane boundary $\Gamma$. A natural approach
consists of using the \emph{method of images} to extend the half-plane problem to a multiple scattering problem containing two scatterers in the full-plane. This problem is in turn reduced to an equivalent two-obstacle bounded BVP in the full-plane by using the multiple-DtN map \cite{Grote01}, as explained above. The intrinsic symmetry of the method of images is exploited for a further reduction of the two-obstacle bounded problem to a single-obstacle BVP defined in the small bounded domain $\Omega_{\text {int}}$ of the half-plane, as shown in Fig. \ref{First}(b). The symmetry also leads to a natural derivation of the new multiple-DtN condition for the half-plane from the corresponding multiple-DtN condition in the full-plane. The derivation of this condition and its extension to a finite number of obstacles in the half-plane constitute one of the main contribution of this work.

The final BVP with the novel multiple-DtN boundary condition, defined in the small domain $\Omega_{\text {int}}$ of the half-plane, can be accurate and efficiently solved by using appropriate  numerical volume methods such as finite elements or finite differences. The radius of the exterior artificial circular boundary $\mathcal{B}$ of the small domain $\Omega_{\text {int}}$ is independent of the distance between the obstacle and the plane boundary. Since the scattering interactions between the plane boundary and the obstacle are handled by the new multiple-DtN condition, there is no need to retain the interacting portion of the plane boundary within the truncated domain. As a result, the computational region is significantly reduced in comparison with the use of a large semi-circular artificial boundary enclosing not only the obstacle, but also a portion of the plane boundary. This is significant in the simulation of sonar and radar problems. For instance, the distance between a submarine and the bottom of the ocean may be several times larger than the characteristic length of the submarine.

The multiple-DtN boundary condition is independent of the numerical
discretization employed in the truncated domain. It has been
successfully implemented using finite element (FEM) and finite
difference (FDM) methods, as seen in \cite{Grote01,JCP2010} and references
therein. In the present work, we follow the approach found in
\cite{JCP2010} for obstacles of arbitrary shape. The problem is written in terms of generalized
curvilinear coordinates inside the small computational region $\Omega_{\text{int}}$, that may consists of a finite number of small sub-domains. Then, novel
elliptic grids conforming to complexly shaped two-dimensional
obstacles are constructed and approximations of the scattered field
supported by them are obtained using a FDM. Nevertheless, the FEM
can also be applied as done in other applications of the DtN map
\cite{Keller01,Grote-Keller01,Givoli1999,Grote01,GivKeller1989,GivoliBook}.

The proposed numerical method is validated by comparing the
approximate and exact far-field patterns for a configuration consisting of one and two
circular obstacles embedded in the upper-half-plane. In particular, by employing a second order finite difference scheme, a second order convergence of the
numerical solution to the exact solution is easily verified.


\section{Mathematical formulation} \label{MathModel}
The mathematical formulation of two-dimensional acoustic scattering by obstacles embedded in the upper half-plane is considered.
To simplify, only scattering from a single acoustically soft obstacle in a homogeneous isotropic medium is studied in this section. The generalization to a finite number of obstacles is discussed in Section \ref{MultiObst}.
The scatterer consists of a simply connected region bounded by a closed
smooth curve $\mathcal{C}$. A cartesian coordinate system where $\textbf{x}=(x,y)$ denotes a point in $\mathbb{R}^2$ is employed. These coordinates are chosen so that the $x$-axis agrees with the physical straight line $\Gamma$ which will be called the {\it plane boundary}. The acoustic medium $\Omega$ is the multiply connected open region of the upper-plane bounded from below by $\Gamma$ and internally by $\mathcal{C}$, as shown in Fig. \ref{First}(a).

An incident plane wave, $ u^{i}(\textbf{x})= e^{i k \textbf{x} \cdot \textbf{d}}$, propagating in $\Omega$ is impinging upon the obstacle and the plane boundary $\Gamma$. The vector $\textbf{d}=(d_x,d_y)$ is a unit vector that points in the {direction of
incidence,} $i=\sqrt{-1}$ and $k>0$ represents the wave number. The reflected wave $u^r$ generated when $u^i$ scatterers from $\Gamma$ in the absence of any obstacles is also considered. For a rigid plane boundary $\Gamma$,
$u^r({\textbf{x}})= e^{i k \textbf{x} \cdot {\tilde {\bf d}}}$ while for soft $\Gamma$, $u^r({\textbf{x}})= -e^{i k \textbf{x} \cdot {\tilde {\textbf{d}}}}$, where
$\tilde {\textbf{d}}=(d_x,-d_y)$.

The total field $u^t$ is
decomposed as $u^t = u^{i} + u^r + u^{s}$ in $\clo{\Omega}$ where $u^{s}$ represents the
scattered field induced by the presence of the obstacle. Since $u^i$ and $u^r$ satisfy the Helmholtz equation and their normal derivatives cancel each other over $\Gamma$, the scattering problem for a soft obstacle and a rigid boundary $\Gamma$ consists
of finding the scattered field $u^{s}$ satisfying,
\begin{eqnarray}
&& \Delta u^{s} + k^2 u^{s} = 0 \quad\qquad \text{in $\Omega$}, \label{BVPsc1} \\
&& u^{s} = -u^{i}-u^r \qquad\qquad \text{on $\mathcal{C}$,} \label{BVPsc2} \\
&&\frac{\partial{u^{s}}}{\partial y} = 0 \qquad\qquad\qquad \text{on $\Gamma$,} \label{BVPsc3}\\
&& \lim_{\rho \rightarrow \infty} \rho^{1/2} \left( \frac{\partial u^s} {\partial \rho} - i k u^{s} \right) = 0. \label{BVPsc4}
\end{eqnarray}
The limit in (\ref{BVPsc4}), known as
the Sommerfeld radiation condition, is assumed to hold uniformly for
all directions above $\Gamma$, and $\rho = |\textbf{x}|$.
The analysis for other physically interesting cases, such as when the plane boundary is soft, or the obstacle is rigid are very similar. Numerical results combining these cases for one or multiple obstacles are also presented in later sections.

Since the boundary condition (\ref{BVPsc3}) on the unbounded boundary
$\Gamma$ is homogeneous, it is possible to show
existence and uniqueness of a classical solution for (\ref{BVPsc1})-(\ref{BVPsc4}) by mimicking the classical  approach described in \cite[Chapter 3]{ColtonKress02}
for obstacles embedded in the full-plane. The proof of uniqueness is
based on Rellich's lemma adapted for the half-plane scenario. Details on Rellich's lemma for domains with infinite boundaries can be found in \cite{PeStoker1954,Miran1957,Odeh1963}. Existence is based on surface potentials and the Riesz-Fredholm theory for integral equations of the second
kind. The only major change required is the use of the upper
half-plane Green's function instead of the two-dimensional fundamental
solution. The Green's functions for the Neumann (rigid)
and Dirichlet (soft) conditions on $\Gamma$ are easily
constructed using the well-known method of images.

The main goal of the present work is the formulation of an efficient
nonreflecting boundary condition that \emph{simultaneously} accounts for the unboundedness of the domain $\Omega$ and the presence of the plane boundary $\Gamma$. This is the subject of the following sections.


\section{Equivalent full-plane multiple-scattering problem and the multiple-DtN map} \label{SHCond}
The original scattering problem in the half-plane (\ref{BVPsc1})-(\ref{BVPsc4}) defined in the previous section is now extended to an equivalent full-plane multiple scattering problem by means of the well-known \emph{method of images}, as shown in Fig. \ref{Second}(a).
\begin{figure}[!ht]
\begin{center}
\includegraphics[width=0.9 \textwidth]{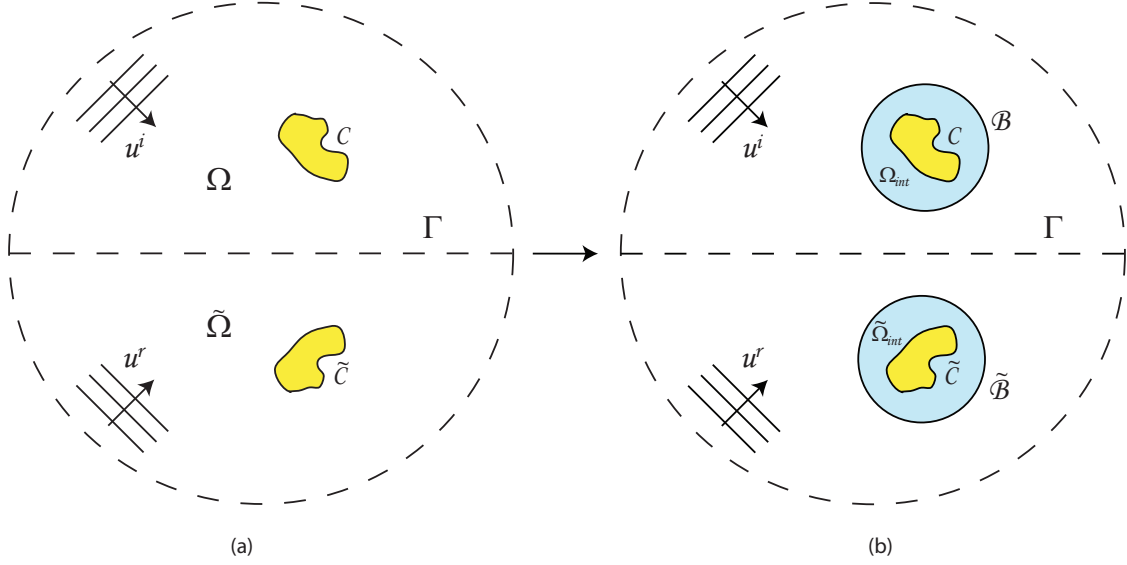}
\caption{(a) Extended full-plane multiple scattering problem, (b) Full-plane multiple-DtN scattering problem.}
\label{Second}
\end{center}
\end{figure}

As a first step, a mirror-image obstacle (with respect to $\Gamma$) with boundary $\mathcal{\tilde C}$ is defined in the lower half-plane. The new bounding curve $\mathcal{\tilde C}$ of the mirror obstacle is formed by all points $(x,y)$ in the lower half-plane such that $(x,-y)\in {\mathcal{C}}$. The image region of $\Omega$ in the lower half-plane is analogously defined and is denoted as
$\tilde{\Omega}$. As a consequence, the domain for the extended problem consists of $\Omega\cup\tilde{\Omega}\cup\Gamma,$ and $\Gamma$ is no longer a boundary.
A new BVP for the full-plane can be defined now. It consists of  finding $u^{s}$ satisfying,
\begin{eqnarray}
&& \Delta u^{s} + k^2 u^{s} = 0 \qquad  \text{in $\Omega \cup \Gamma \cup \tilde{\Omega}$}, \label{BVPfull1} \\
&& u^{s} = -u^{i}-u^r \qquad\qquad \text{on $\mathcal{C} \cup \tilde{\mathcal{C}}$,} \label{BVPfull2} \\
&& \lim_{\rho \rightarrow \infty} \rho^{1/2} \left( \frac{\partial u^s} {\partial \rho} - i k u^{s} \right) = 0, \label{BVPfull3}
\end{eqnarray}
where $u^{i}$ and $u^{r}$ are defined as in Section \ref{MathModel}, but with their domain in the full-plane $\mathbb{R}^2$.
The existence and uniqueness for this BVP is proven in \cite[Chapter 3]{ColtonKress02}.

It is also easy to show that if $u^s(x,y)$, with $(x,y)\in\clo{{\Omega \cup \Gamma \cup \tilde{\Omega}}}$ is a solution of the BVP (\ref{BVPfull1})-(\ref{BVPfull3}), then ${\hat u}^s(x,y)=u^s(x,-y)$, is also a solution. Then, uniqueness and the definition of $\Omega$, $\tilde\Omega$, and $\Gamma$ leads to
\begin{equation}
u^s(x,y)=u^s(x,-y),
\qquad\mbox{for all}\quad
(x,y)\in\clo{{\Omega\cup\Gamma\cup\tilde{\Omega}}}
\end{equation}
Since this property plays a key role in this work, we will express it as the following lemma.
\begin{Lemma} \label{Lemma1}
The solution $u^s$ of the BVP (\ref{BVPfull1})-(\ref{BVPfull3}) is symmetric about the plane boundary $\Gamma$.
\end{Lemma}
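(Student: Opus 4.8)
The plan is to exploit the uniqueness asserted for the full-plane BVP (\ref{BVPfull1})--(\ref{BVPfull3}) together with the reflection symmetry of the entire configuration. Concretely, I would introduce the reflected field ${\hat u}^s(x,y):=u^s(x,-y)$ and show that it solves \emph{exactly} the same BVP as $u^s$. Since that problem has a unique solution, this forces ${\hat u}^s\equiv u^s$, which is precisely the claimed identity $u^s(x,y)=u^s(x,-y)$ on $\clo{\Omega\cup\Gamma\cup\tilde\Omega}$.

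Verifying that ${\hat u}^s$ solves the BVP splits into three checks. First, the Helmholtz operator $\Delta+k^2$ is invariant under the reflection $(x,y)\mapsto(x,-y)$, because the second derivative in $y$ is unchanged when $y\mapsto-y$; hence ${\hat u}^s$ satisfies (\ref{BVPfull1}) on the reflected domain, which equals $\Omega\cup\Gamma\cup\tilde\Omega$ by construction. Second, the Sommerfeld condition (\ref{BVPfull3}) is preserved, since $\rho=|\textbf{x}|$ and the radial derivative $\partial/\partial\rho$ are both invariant under the reflection, and the reflection merely permutes the directions over which the limit is required to hold uniformly. The remaining and decisive check is the obstacle boundary condition (\ref{BVPfull2}).

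The main obstacle is to confirm that the Dirichlet data $-u^i-u^r$ is itself invariant under the reflection, so that ${\hat u}^s$ inherits (\ref{BVPfull2}). Two facts combine here. Geometrically, the mirror construction of $\tilde{\mathcal{C}}$ guarantees that $(x,y)\in\mathcal{C}\cup\tilde{\mathcal{C}}$ if and only if $(x,-y)\in\mathcal{C}\cup\tilde{\mathcal{C}}$, so the reflection maps the scattering boundary onto itself. Analytically, with $\textbf{d}=(d_x,d_y)$ and $\tilde{\textbf{d}}=(d_x,-d_y)$ for the rigid plane boundary, one checks directly that $u^i(x,-y)=u^r(x,y)$ and $u^r(x,-y)=u^i(x,y)$, whence $(u^i+u^r)(x,-y)=(u^i+u^r)(x,y)$. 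Evaluating ${\hat u}^s$ on the boundary then gives ${\hat u}^s(x,y)=u^s(x,-y)=-(u^i+u^r)(x,-y)=-(u^i+u^r)(x,y)$, which is exactly the condition demanded of ${\hat u}^s$.

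With all three conditions verified, the uniqueness of the full-plane solution closes the argument. I expect the only place requiring genuine care is this interchange of $u^i$ and $u^r$ under reflection, since it is precisely what singles out the rigid (Neumann) plane boundary assumed in this section; for a soft $\Gamma$ the identical computation yields antisymmetry rather than symmetry, so the reflection argument would need to be adjusted accordingly.
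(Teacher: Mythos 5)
Your proposal is correct and follows essentially the same route as the paper: the authors also define the reflected field ${\hat u}^s(x,y)=u^s(x,-y)$, observe that it solves the BVP (\ref{BVPfull1})--(\ref{BVPfull3}), and invoke uniqueness to conclude $u^s(x,y)=u^s(x,-y)$. The details you supply --- invariance of the Helmholtz operator and the radiation condition under reflection, symmetry of $\mathcal{C}\cup\tilde{\mathcal{C}}$, and the interchange $u^i(x,-y)=u^r(x,y)$, $u^r(x,-y)=u^i(x,y)$ for the rigid boundary (with the antisymmetric variant for a soft $\Gamma$) --- are exactly the checks the paper leaves implicit behind its phrase ``it is also easy to show.''
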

A trivial consequence of Lemma \ref{Lemma1} is that $(\partial u^s / \partial y)(x,0)=0$. This result can be used to establish the equivalence between BVPs  (\ref{BVPsc1})-(\ref{BVPsc4}) and (\ref{BVPfull1})-(\ref{BVPfull3}).
The advantage of extending the original single scattering half-plane problem to the equivalent full-plane multiple scattering problem for two mirror obstacles is that a computationally efficient and accurate numerical method (multiple-DtN technique) for this problem has recently been  obtained by Grote and Kirsch \cite {Grote01}.
The novel multiple-DtN technique consists of reducing the unbounded domain of the above full-plane BVP into small annular regions surrounding the obstacles, as illustrated in Fig. \ref{Second}(b). This is followed by simultaneously imposing on each artificial
boundary an appropriate nonreflecting boundary condition that avoids spurious reflection and handles the multiple obstacles interactions. This condition is called the multiple-DtN boundary condition.

More precisely following \cite{JCP2010}, we assume that the obstacle in the upper half-plane is well separated from $\Gamma$. Therefore, it can be enclosed by an artificial circular boundary
$\mathcal{B}$ with radius $R$, centered at a point $\textbf{b}=(b_x,b_y)$ inside the  obstacle, which is completely contained in $\Omega$ ($b_y>R$). Similarly, an image circle $\tilde{\mathcal{B}}$ of $\mathcal{B}$ with center at $\tilde{\textbf{b}}=(b_x,-b_y)$ and same radius $R$, completely contained in $\tilde{\Omega}$ and enclosing the image obstacle, can be constructed. As a result of this procedure, two small annular bounded domains $\Omega_{\text {int}}\subset\Omega$, bounded internally by  $\mathcal{C}$ and externally by
$\mathcal{B}$,  and $\tilde{\Omega}_{\text{int}}\subset\tilde{\Omega}$, bounded internally by $\tilde{\mathcal{C}}$ and externally by the artificial circular boundary
$\tilde{\mathcal{B}}$, are obtained. These two small domains are clearly symmetric about $\Gamma$. Also, $\Psi$ and $\tilde{\Psi}$ denote the infinite open
regions, bounded internally by $\mathcal{B}$ and
$\tilde{\mathcal{B}}$, respectively.
The common unbounded region outside the two circular artificial boundaries $\mathcal{B}$ and $\tilde{\mathcal{B}}$ is denoted as
  $\Omega_{\text{ext}}=
\Psi\cap\tilde{\Psi}$.

Since the artificial boundary
$\mathcal{B}$ is a circle, it is convenient to define a local polar coordinate system $(r,\theta)$, with origin at the center
${\textbf{b}}=( b_x, b_y)$ of the circle ${\mathcal{B}},$ in the outer region $\Psi$. Similarly, another local polar coordinate system
$(\tilde{r},\tilde{\theta}),$ with origin at the center
$\tilde{\textbf{b}}$ of the circle $\tilde{\mathcal{B}},$ is defined in the unbounded region $\tilde{\Psi}$. Since the angles $\theta$ and
$\tilde{\theta}$ are counterclockwise measured,
the symmetric image of the point $\textbf{x} = (R,\theta) \in \mathcal{B}$ about the $x$-axis is the point
$\tilde{\textbf{x}}=(R,\tilde{\theta}) \in \tilde{\mathcal{B}},$ with $\tilde{\theta}=-\theta$.

The derivation of the full-plane multiple-DtN map is based upon the following important {\it Decomposition Theorem} previously formulated and proved in Proposition 1 in
\cite{Grote01} and Theorem 2 in \cite{JCP2010}.

\begin{Theorem} \label{TheoremDecomp}
Let $u^{s}$ be the solution to the BVP
(\ref{BVPfull1})-(\ref{BVPfull3}). Then, $u^{s}$ can be uniquely
decomposed into purely outgoing wave fields $v$ and $\tilde{v}$ such
that
\begin{equation}
u^{s} =  v + \tilde{v} ,\qquad \text{in
$\overline{\Omega}_{\text{ext}}$}, \label{Decomp1}
\end{equation}
where
\begin{eqnarray}
&\Delta v + k^2 v = 0 \qquad \text{in $\Psi$},
&\quad\mbox{and}\quad\lim_{r \rightarrow \infty} r^{1/2} (\frac{\partial v}{\partial r} - i k v ) = 0,\label{Decomp2}\\
&\Delta \tilde{v} + k^2 \tilde{v} = 0 \qquad \text{in  $\tilde{\Psi}$},
&\quad\mbox{and}\quad\lim_{\tilde{r} \rightarrow \infty} \tilde{r}^{1/2} ( \frac{\partial \tilde{v}}{\partial \tilde{r}} - i k
\tilde{v} ) = 0.
\label{Decomp3}
\end{eqnarray}
\end{Theorem}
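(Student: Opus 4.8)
The plan is to establish existence by a Green's representation argument and uniqueness by analytic continuation combined with a Rellich-type vanishing result. For existence, I would start from the fact that, by the well-posedness cited after (\ref{BVPfull3}), $u^{s}$ is a radiating solution of the Helmholtz equation in the unbounded region $\Omega_{\text{ext}}=\Psi\cap\tilde{\Psi}$, whose finite boundary is the \emph{disjoint} union $\mathcal{B}\cup\tilde{\mathcal{B}}$. Applying Green's representation theorem for radiating solutions in an exterior domain (as in \cite[Chapter 3]{ColtonKress02}) yields
\begin{equation}
u^{s}(\mathbf{x})=\int_{\mathcal{B}\cup\tilde{\mathcal{B}}}\left(\frac{\partial u^{s}}{\partial n}(\mathbf{y})\,\Phi(\mathbf{x},\mathbf{y})-u^{s}(\mathbf{y})\,\frac{\partial \Phi(\mathbf{x},\mathbf{y})}{\partial n(\mathbf{y})}\right)ds(\mathbf{y}),\qquad \mathbf{x}\in\Omega_{\text{ext}},
\end{equation}
where $\Phi$ is the outgoing fundamental solution and $n$ is the unit normal directed into the disks bounded by $\mathcal{B}$ and $\tilde{\mathcal{B}}$. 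Since the two circles are disjoint, I would simply define $v$ and $\tilde{v}$ as the two boundary integrals over $\mathcal{B}$ and over $\tilde{\mathcal{B}}$ separately. Each is a combination of single- and double-layer potentials supported on a single circle, hence a $C^{\infty}$ solution of the Helmholtz equation in the exterior of that circle which inherits the Sommerfeld behavior of $\Phi$ and its derivatives. Thus $v$ satisfies (\ref{Decomp2}) in $\Psi$, $\tilde{v}$ satisfies (\ref{Decomp3}) in $\tilde{\Psi}$, and $u^{s}=v+\tilde{v}$ on $\Omega_{\text{ext}}$, which gives (\ref{Decomp1}).

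For uniqueness, suppose $u^{s}=v+\tilde{v}=w+\tilde{w}$ are two such decompositions and set $\vp:=v-w$ on $\Psi$ and $\vp:=\tilde{w}-\tilde{v}$ on $\tilde{\Psi}$. On the overlap $\Omega_{\text{ext}}=\Psi\cap\tilde{\Psi}$ the two expressions agree, since $(v-w)+(\tilde{v}-\tilde{w})=0$ there, so $\vp$ is well defined on $\Psi\cup\tilde{\Psi}$ and solves the Helmholtz equation on that set. The key geometric observation is that, because $b_{y}>R$, the closed disks bounded by $\mathcal{B}$ and $\tilde{\mathcal{B}}$ are disjoint, so their exteriors cover the whole plane, $\Psi\cup\tilde{\Psi}=\mathbb{R}^{2}$. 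Hence $\vp$ extends to an \emph{entire} solution of the Helmholtz equation on $\mathbb{R}^{2}$ which, being equal to $v-w$ near infinity, also satisfies the radiation condition (\ref{Decomp2}). Expanding such a globally regular solution in the Fourier--Bessel modes $J_{n}(kr)e^{in\theta}$ and noting that no such standing-wave mode is outgoing, the radiation condition forces every coefficient to vanish; equivalently one invokes Rellich's lemma. Therefore $\vp\equiv 0$, i.e. $v=w$ and $\tilde{v}=\tilde{w}$, which is the claimed uniqueness.

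I expect the uniqueness step to be the main obstacle, and within it the delicate point is the passage to an entire solution: it relies crucially on the disjointness of the two disks, which guarantees there is no annular region excluded from both $\Psi$ and $\tilde{\Psi}$ and hence that the two pieces of $\vp$ glue across the whole plane into a genuine (real-analytic) solution. The remaining care is in the existence step, where one must verify that $u^{s}$ satisfies the regularity and radiation hypotheses required by Green's representation theorem in the exterior domain $\Omega_{\text{ext}}$; this is exactly what the well-posedness of (\ref{BVPfull1})--(\ref{BVPfull3}) provides, so that the splitting of the representation integral is legitimate. This mirrors the arguments of Proposition 1 in \cite{Grote01} and Theorem 2 in \cite{JCP2010}.
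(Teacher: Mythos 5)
Your proof is correct and takes essentially the same route as the proof this paper relies on: the paper does not prove Theorem \ref{TheoremDecomp} itself but defers to Proposition 1 of \cite{Grote01} (and Theorem 2 of \cite{JCP2010}), where the argument is precisely yours---existence by splitting the Green's representation of $u^{s}$ in $\Omega_{\text{ext}}$ into the two integrals over the disjoint circles $\mathcal{B}$ and $\tilde{\mathcal{B}}$, and uniqueness by gluing the difference fields across $\Psi\cup\tilde{\Psi}=\mathbb{R}^{2}$ into an entire radiating solution that vanishes by Rellich's lemma. Your added remarks on the role of the disjointness of the disks (guaranteed by $b_{y}>R$) and on the regularity needed for the representation theorem are accurate and consistent with the cited treatment in \cite{ColtonKress02}.
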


Analytical formulas for $v$ and $\tilde{v}$ in terms of the boundary values $v(R,\theta')$ and $\tilde{v}(R,\tilde{\theta}')$ on the artificial circular boundaries $\mathcal{B}$ and $\tilde{\mathcal{B}},$ respectively, can be readily obtained using eigenfunction expansions. They are given by
\begin{eqnarray}
&&v(r,\theta) = \frac{1}{2\pi} \sum_{n=0}^{\infty}
\epsilon_{n} \frac{H_{n}(k r)}{H_{n}(k R)} \int_{0}^{2 \pi}
v(R,\theta')\cos n (\theta-\theta') d \theta', \label{vexp1}
\end{eqnarray}
\begin{eqnarray}
&&\tilde{v}(\tilde{r},\tilde{\theta}) = \frac{1}{2\pi}
\sum_{n=0}^{\infty} \epsilon_{n} \frac{H_{n}(k
\tilde{r})}{H_{n}(k R)} \int_{0}^{2 \pi} \tilde{v}(R,\tilde{\theta}')\cos n
(\tilde{\theta}- \tilde{\theta}') d \tilde{\theta'}, \label{vexp2}
\end{eqnarray}
where $r, \tilde{r}\ge R$, $0\le \theta,\tilde{\theta}\le 2\pi$, $H_{n}$ stands
for the $n^{th}$ order Hankel function of the first kind, and
$\epsilon_n$ is the Neumann factor, {i.e.,} $\epsilon_0 = 1$ and
$\epsilon_n = 2$ for $n \geq 1$.

Among the results found in \cite{Grote01,JCP2010}, as a consequence of Theorem \ref{TheoremDecomp}, one of great practical value is the possibility of reducing the extended full-plane multiple-scattering problem (\ref{BVPfull1})-(\ref{BVPfull3}) to the following equivalent full-plane bounded problem for the restriction of $u^s$ to $\Omega_{\text{int}} \cup \tilde{\Omega}_{\text{int}}$ and the values of $v$ and $\tilde{v}$ on $\mathcal{B}$ and $\mathcal{\tilde{B}}$, respectively (see Fig. \ref{Second}),
\begin{eqnarray}
&& \Delta u^{s} + k^2 u^{s} = 0 \qquad  \text{in $\Omega_{\text{int}} \cup \tilde{\Omega}_{\text{int}}$}, \label{BVPDtN1} \\
&& u^{s} = -u^{i} - u^{r} \qquad\qquad \text{on $\mathcal{C} \cup \tilde{\mathcal{C}}$,} \label{BVPDtN2} \\
&& u^{s} = v + \tilde{P}[\tilde{v}] \qquad\qquad \text{on $\mathcal{B}$,} \label{BVPDtN3} \\
&& u^{s} = \tilde{v} + P[v] \qquad\qquad \text{on $\tilde{\mathcal{B}}$,} \label{BVPDtN4} \\
&& \partial_{r}u^{s} = M[v] + \tilde{T}[\tilde{v}] \qquad \text{on $\mathcal{B}$,} \label{BVPDtN5} \\
&& \partial_{\tilde{r}}u^{s} = \tilde{M}[\tilde{v}] + T[v] \qquad \text{on $\mathcal{\tilde{B}}$.} \label{BVPDtN6}
\end{eqnarray}
In the remainder of this work, this BVP will be called {\it full-plane multiple-DtN scattering problem}.
Here we have adopted similar operators, $P$, ${\tilde P}$, $M$, ${\tilde M}$, $T$, and ${\tilde T}$, as those introduced by Grote and Kirsch in \cite{Grote01}.
As discussed in \cite{Grote01,JCP2010}, conditions (\ref{BVPDtN3})-(\ref{BVPDtN4}) describe the continuity of the scattered field across the artificial boundaries $\mathcal{B}$ and $\tilde{\mathcal{B}}$, respectively. Similarly, conditions (\ref{BVPDtN5})-(\ref{BVPDtN6}) correspond to the continuity of the normal derivative of the scattered field across those boundaries, respectively.

The above operators are defined by operating over the analytical formulas (\ref{vexp1})-(\ref{vexp2}) for $v$ and $\tilde{v}$, evaluated on points in either $\mathcal{B}$ or $\tilde{\mathcal{B}}$.
For instance, the operator ${\tilde P}$ maps values of the outgoing field $\tilde{v}|_{\tilde{\mathcal{B}}}$ to values of the the same outgoing field on $\mathcal{B}$. Its definition can be conveniently expressed as  ${\tilde P} : \tilde{v}|_{\tilde{\mathcal{B}}} \mapsto
\tilde{v}|_{\mathcal{B}}$. Similarly, the operator ${P}$ is defined as
${P} : v|_\mathcal{B} \mapsto {v}|_{\tilde{\mathcal{B}}}$.
These operators are called the \emph{propagation} operators in \cite{Grote01}. Explicit formulas can be obtained from the analytical expressions
(\ref{vexp1})-(\ref{vexp2}) for $v$ and $\tilde{v}$. In fact,
\begin{eqnarray}
&& P[v](\tilde{\theta}) = \frac{1}{2 \pi} \sum_{n=0}^{\infty} \epsilon_{n} \frac{H_{n}(k r(\tilde{\theta}))}{H_{n}(kR)} \int_{0}^{2 \pi} v(R,\theta') \cos n(\theta(\tilde{\theta}) - \theta') d \theta', \label{defP} \\
&& {\tilde P}[\tilde{v}](\theta) = \frac{1}{2 \pi} \sum_{n=0}^{\infty} \epsilon_{n} \frac{H_{n}(k \tilde{r}(\theta))}{H_{n}(kR)} \int_{0}^{2 \pi} \tilde{v}(R,\tilde{\theta}') \cos n(\tilde{\theta}(\theta) - \tilde{\theta}') d \tilde{\theta}'. \label{defPt}
\end{eqnarray}

The operators $M$ and ${\tilde M}$ correspond to the standard DtN operator. The first one maps the Dirichlet datum $v|_{\mathcal{B}}$  to the Neumann datum
$\partial_r v|_{\mathcal{B}}$ on ${\mathcal{B}}$,  i.e.,
$M: v|_{\mathcal{B}} \mapsto \partial_r v|_{\mathcal{B}}$, while ${\tilde M}:
\tilde{v}|_{\tilde{\mathcal{B}}}\mapsto\partial_ {\tilde{r}}
\tilde{v}|_{\tilde{\mathcal{B}}}$. They are explicitly given by
\begin{eqnarray}
&& M[v](\theta) = \frac{k}{2 \pi} \sum_{n=0}^{\infty} \epsilon_{n} \frac{H'_{n}(k R)}{H_{n}(kR)} \int_{0}^{2 \pi} v(R,\theta') \cos n(\theta - \theta') d \theta', \label{defM} \\
&& {\tilde M}[\tilde{v}](\tilde{\theta}) = \frac{k}{2 \pi} \sum_{n=0}^{\infty} \epsilon_{n} \frac{H'_{n}(k R)}{H_{n}(kR)} \int_{0}^{2 \pi} \tilde{v}(R,\tilde{\theta}') \cos n(\tilde{\theta} - \tilde{\theta}') d \tilde{\theta}'. \label{defMt}
\end{eqnarray}

Finally, the \emph{transfer} operators $T$ maps  $v |_{\mathcal{B}}$ to its derivative in the outer normal direction to the circle $\tilde{\mathcal{B}}$ at points on $\tilde{\mathcal{B}}$, i.e,
$T : v |_{\mathcal{B}} \mapsto \partial_{\tilde{r}} v|_{\tilde{\mathcal{B}}}.$ Similarly,
${\tilde T} : \tilde{v} |_{\tilde{\mathcal{B}}} \mapsto \partial_{r} \tilde{v}|_{\mathcal{B}}$. They are defined by the explicit formulas
\begin{eqnarray}
T[v](\tilde{\theta}) &=& \cos(\theta(\tilde{\theta}) - \tilde{\theta} ) \frac{k}{2 \pi} \sum_{n=0}^{\infty} \epsilon_{n} \frac{H'_{n}(k r(\tilde{\theta}))}{H_{n}(kR)} \int_{0}^{2 \pi} v(R,\theta') \cos n(\theta(\tilde{\theta}) - \theta') d \theta' \notag \\
&& \quad + \frac{\sin(\theta(\tilde{\theta}) - \tilde{\theta})}{2 \pi r(\tilde{\theta})} \sum_{n=0}^{\infty} \epsilon_{n} \frac{n H_{n}(k r(\tilde{\theta}))}{H_{n}(kR)} \int_{0}^{2 \pi} v(R,\theta') \sin n(\theta(\tilde{\theta}) - \theta') d \theta', \label{defT} \\
{\tilde T}[\tilde{v}](\theta) &=& \cos(\tilde{\theta}(\theta) - \theta) \frac{k}{2 \pi} \sum_{n=0}^{\infty} \epsilon_{n} \frac{H'_{n}(k \tilde{r}(\theta))}{H_{n}(kR)} \int_{0}^{2 \pi} \tilde{v}(R,\tilde{\theta}') \cos n(\tilde{\theta}(\theta) - \tilde{\theta}') d \tilde{\theta}' \notag \\
&& \quad + \frac{\sin(\tilde{\theta}(\theta) - \theta)}{2 \pi \tilde{r}({\theta})} \sum_{n=0}^{\infty} \epsilon_{n} \frac{n H_{n}(k \tilde{r}(\theta))}{H_{n}(kR)} \int_{0}^{2 \pi} \tilde{v}(R,\tilde{\theta}') \sin n(\tilde{\theta}(\theta) - \tilde{\theta}') d \tilde{\theta}'. \label{defTt}
\end{eqnarray}

For the evaluations ${\tilde T}[{v}]$ and ${\tilde P}[\tilde{v}]$, the dependence of $(\tilde{r},\tilde{\theta})$ on $(R,\theta)$ is needed. This is given by the following formulas,
\begin{eqnarray}
&& \tilde{r}^2 = 4 b_{y}^2 + R^2 + 4 b_{y}R \sin
\theta, \label{rel1} \\
&& \cos \tilde{\theta} = \frac{R}{\tilde{r}} \cos \theta, \qquad \sin \tilde{\theta} = \frac{R}{\tilde{r}} \sin \theta.
\label{rel2}
\end{eqnarray}
The analogous dependence of $(r,\theta)$ on $(R,\tilde{\theta})$ is also needed for the evaluations $T[v]$ and $P[v]$.

Adopting the techniques used in \cite{Grote01,JCP2010}, the full-plane multiple-DtN problem (\ref{BVPDtN1})-(\ref{BVPDtN6}) can be numerically solved and the scattered field $u^{s}$ can be approximated everywhere in $\Omega_{\text{int}} \cup \tilde{\Omega}_{\text{int}}$. Instead of doing this, we propose to further reduce this full-plane bounded BVP to an equivalent half-plane bounded BVP. This is accomplished by exploiting the symmetry referred to in Lemma \ref{Lemma1}. A similar procedure is found in \cite{GivVig1993,Givoli05,GivKeller1989,GivoliBook} for problems admitting geometrical symmetries in the DtN formulation. The practical importance of this process is that the computational effort is significantly reduced since only half of the compuational domain is to be discretized. This is the subject of the following sections.


\section{The multiple-DtN map adapted to the half-plane} \label{MDtNhalf}
In this section, the full-plane multiple-DtN nonreflecting boundary condition is adapted to  half-plane scattering
problems with acoustically soft or hard plane boundaries. This constitutes the major contribution of the present work. The goal is to derive an exact nonreflecting boundary condition for the circular artificial boundary $\mathcal{B}$ that does not depend on the outgoing field $\tilde{v}$ explicitly. Then, by using this new boundary condition transform the extended full-plane multiple-DtN scattering BVP
(\ref{BVPDtN1})-(\ref{BVPDtN6}) into an equivalent {\it half-plane multiple-DtN scattering BVP}, as shown in Fig. \ref{Third}.
\begin{figure}[!ht]
\begin{center}
\includegraphics[width=0.9 \textwidth]{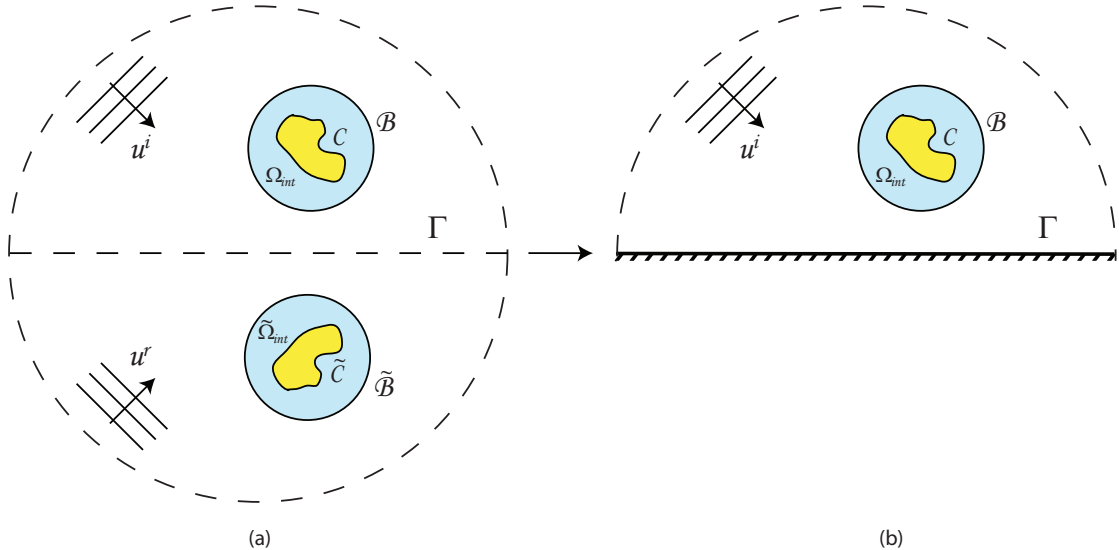}
\caption{(a) Full-plane multiple-DtN scattering problem, (b) Half-plane multiple-DtN scattering problem.}
\label{Third}
\end{center}
\end{figure}

As a first step to accomplish the above goal, we will establish an important symmetry relation between $v$ and $\tilde v$ in $\clo{\Omega}_{\text{ext}}$. This is the statement of the following lemma.
\begin{Lemma} \label{LemmaSymmetry}
Let $u^{s}$ be the solution to the BVP
(\ref{BVPfull1})-(\ref{BVPfull3}), and $v$ and $\tilde{v}$ be its unique decomposition into purely outgoing wave fields in $\clo{\Omega}_{\text{ext}}$ as stated in Theorem \ref{TheoremDecomp}. Then, $v(x,y) = \tilde{v}(x,-y)$ for all $(x,y) \in \clo{\Omega}_{\text{ext}}$.
\end{Lemma}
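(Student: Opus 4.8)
The plan is to leverage the \emph{uniqueness} of the outgoing decomposition granted by Theorem \ref{TheoremDecomp}, combined with the mirror symmetry of $u^s$ established in Lemma \ref{Lemma1}. The idea is to reflect the pair $(v,\tilde v)$ across the plane boundary $\Gamma$, show that the reflected pair is again an admissible outgoing decomposition of the \emph{same} field $u^s$, and then invoke uniqueness to conclude that the reflected fields must coincide with the original ones. The asserted identity $v(x,y)=\tilde v(x,-y)$ falls out immediately from that coincidence.

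First I would introduce the reflection $\sigma(x,y)=(x,-y)$ and set $w(x,y)=\tilde v(x,-y)$ and $\tilde w(x,y)=v(x,-y)$. The structural fact to record is that $\sigma$ interchanges the two exterior regions: since the center $\tilde{\textbf{b}}=(b_x,-b_y)$ is the mirror image of $\textbf{b}=(b_x,b_y)$ and the circles $\mathcal{B}$, $\tilde{\mathcal{B}}$ share the radius $R$, one has $\sigma(\tilde\Psi)=\Psi$ and $\sigma(\Psi)=\tilde\Psi$, so that $\Omega_{\text{ext}}=\Psi\cap\tilde\Psi$ is invariant under $\sigma$. I would then verify that $w$ inherits the defining properties (\ref{Decomp2}) of $v$: invariance of the Laplacian under $\sigma$ gives $\Delta w+k^2 w=0$ in $\Psi$, and the Sommerfeld condition is preserved because $\sigma$ carries the local radial variable $\tilde r$ (measured from $\tilde{\textbf{b}}$) onto the local radial variable $r$ (measured from $\textbf{b}$), so the radiation condition satisfied by $\tilde v$ in $\tilde r$ becomes verbatim the radiation condition for $w$ in $r$. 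Symmetrically, $\tilde w$ is purely outgoing in $\tilde\Psi$.

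Finally I would close with uniqueness. Evaluating $u^s=v+\tilde v$ at the reflected point and applying Lemma \ref{Lemma1} yields, for every $(x,y)\in\clo{\Omega}_{\text{ext}}$,
\[
v(x,y)+\tilde v(x,y)=u^s(x,y)=u^s(x,-y)=v(x,-y)+\tilde v(x,-y)=\tilde w(x,y)+w(x,y).
\]
Thus $(w,\tilde w)$ is a decomposition of $u^s$ into an outgoing field in $\Psi$ plus an outgoing field in $\tilde\Psi$, exactly as in Theorem \ref{TheoremDecomp}. The uniqueness clause of that theorem then forces $v=w$ and $\tilde v=\tilde w$ on $\clo{\Omega}_{\text{ext}}$, and the first equality is precisely $v(x,y)=\tilde v(x,-y)$.

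I expect the only delicate point to be the radiation-condition step, namely confirming that reflection sends purely outgoing behavior to purely outgoing behavior. This is where it is essential that Theorem \ref{TheoremDecomp} phrases the decay in terms of the local radii $r$ and $\tilde r$ rather than the global $\rho=|\textbf{x}|$: because $\sigma$ maps $\tilde{\textbf{b}}$ to $\textbf{b}$ and hence $\tilde r$ to $r$, the translation of the limit in (\ref{Decomp3}) into the limit in (\ref{Decomp2}) is direct, with no stray sign or geometric correction.
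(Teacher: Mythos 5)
Your proposal is correct and follows essentially the same route as the paper's own proof: define the reflected pair (the paper calls them $V$ and $\tilde{V}$), check they are purely outgoing, use Lemma \ref{Lemma1} to show they decompose the same field $u^{s}$, and invoke the uniqueness clause of Theorem \ref{TheoremDecomp}. Your only addition is spelling out why the reflection preserves the Helmholtz equation and the radiation conditions with respect to the local radii, a step the paper states as immediate.
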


\begin{proof}
Define a pair $V$ and $\tilde V$ of outgoing fields as
\begin{eqnarray}
&& V(x,y) := \tilde{v}(x,-y), \qquad (x,y) \in \Psi, \label{NewWF1} \\
&& \tilde{V}(x,y) := v(x,-y), \qquad (x,y) \in \tilde{\Psi} \label{NewWF2},
\end{eqnarray}
where $v$ and $\tilde v$ form the unique decomposition of $u^s$ into purely outgoing waves of Theorem \ref{TheoremDecomp}. It immediately follows from these definitions that
$V$ and $\tilde{V}$ are also radiating solutions to the Helmholtz equation and satisfy
(\ref{Decomp2})-(\ref{Decomp3}).

It is also possible to show that   $V$ and $\tilde{V}$ as defined in (\ref{NewWF1})-(\ref{NewWF2}) constitute a decomposition of the scattered field $u^{s}$ of the form (\ref{Decomp1}). For this purpose, let $(x,y)$ be an arbitrary point in $\clo{\Omega}_{\text{ext}}$, then definitions (\ref{NewWF1})-(\ref{NewWF2}) and the decomposition (\ref{Decomp1}) lead to
\begin{eqnarray}
V(x,y) + \tilde{V}(x,y) = \tilde{v}(x,-y) + v(x,-y) = u^{s}(x,-y) = u^{s}(x,y), \qquad
(x,y) \in \clo{\Omega}_{\text{ext}}\notag
\end{eqnarray}
The last equality follows from the fact that $u^{s}$ is symmetric about $\Gamma$ as stated in Lemma \ref{Lemma1}. By the uniqueness of the decomposition of $u^{s}$ into outgoing wave fields (Theorem \ref{TheoremDecomp}), we conclude that
\begin{eqnarray}
&& v(x,y) = V(x,y)=\tilde{v}(x,-y), \qquad
(x,y) \in \clo{\Omega}_{\text{ext}}, \label{ImagD0}
\end{eqnarray}
which establishes the desired symmetry.
\end{proof}

Notice that the symmetric image of
$\tilde{\textbf{x}}=(R,\tilde{\theta}) \in \tilde{\mathcal{B}}$ about the $x$-axis is the point
$\textbf{x} = (R,-\theta) \in \mathcal{B},$ where $\tilde{\theta}=\theta$. Therefore, an immediate consequence of Lemma \ref{LemmaSymmetry} is that
\begin{eqnarray}
&& \tilde{v}(R,\tilde{\theta})=v(R,-\theta). \label{ImagD}
\end{eqnarray}
This is the key property for the definition of the new {\it half-plane multiple-DtN boundary condition} from the corresponding full-plane multiple-DtN condition (\ref{BVPDtN3})-(\ref{BVPDtN6}).

As a next step, we introduce a
\emph{symmetry} operator $S$. This operator $S$ maps a boundary value function $w$ defined in $\mathcal{B}$ to another boundary value function $\tilde w$ defined in $\tilde{\mathcal{B}}$. More precisely,
\begin{eqnarray}
&& S: w|_{\mathcal{B}}\mapsto \tilde{w}|_{\tilde{\mathcal{B}}}\notag\\
&& \tilde{w}(R,\tilde{\theta})=w(R,-\theta),\quad (R,\tilde{\theta}) \in \tilde{\mathcal{B}},\notag
\end{eqnarray}
where the numerical value of $\tilde\theta$ coincides with the numerical value of $\theta$, i.e., $\tilde\theta=\theta$.

Finally, the above results and the fact that $u^s$ is symmetric about $\Gamma$ (Lemma \ref{Lemma1}) suggest that the explicit computation of $u^s$ in $\tilde{\Omega}_{\text{int}}$ and $\tilde v$ on $\tilde{\mathcal{B}}$ in the full-plane problem are no longer needed. As a consequence, the
full-plane multiple-DtN BVP (\ref{BVPDtN1})-(\ref{BVPDtN6}) may be reduced to  the much simpler half-plane bounded BVP for $u$ in $\Omega_{\text{int}}$ and $w$ on $\mathcal{B}$ satisfying,
\begin{eqnarray}
&& \Delta u + k^2 u = 0 \qquad  \text{in $\Omega_{\text{int}}$}, \label{BVPDtNN1} \\
&& u = -u^{i} - u^{r} \qquad\qquad \text{on $\mathcal{C}$,} \label{BVPDtNN2} \\
&& u = w + {\tilde P}S[w] \qquad\qquad \text{on $\mathcal{B}$,} \label{BVPDtNN3} \\
&& \partial_{r} u = M[w] + {\tilde T}S [w]
\qquad \text{on $\mathcal{B}$,} \label{BVPDtNN4}
\end{eqnarray}
where $\tilde{P}S[w]=(\tilde{P}\circ S)[w]$ and $\tilde{T}S[w]=(\tilde{T}\circ S)[w]$. This is the BVP illustrated in Fig. \ref{Third}(b).
It will be shown below (Theorem \ref{TheoremEquiv}) that this BVP is well-posed and is indeed equivalent to the BVP (\ref{BVPDtN1})-(\ref{BVPDtN6}).  We call it the \emph{half-plane multiple-DtN scattering BVP}. Moreover,
the boundary conditions (\ref{BVPDtNN3})-(\ref{BVPDtNN4}) define the \emph{half-plane multiple-DtN boundary condition}.

\begin{Theorem} \label{TheoremEquiv}
The half-plane bounded scattering BVP (\ref{BVPDtNN1})-(\ref{BVPDtNN4}) has a unique solution $(u,w)$ such that $u$ coincides with the restriction of $u^{s}$ to $\Omega_{\text{int}}$, and $w$ coincides with $v$ on $\mathcal{B}$, where the triple $(u^{s},v,\tilde{v})$ is the unique solution to the full-plane bounded multiple scattering BVP (\ref{BVPDtN1})-(\ref{BVPDtN6}).
\end{Theorem}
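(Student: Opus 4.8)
The plan is to prove the two assertions—existence and uniqueness—by passing back and forth between the half-plane BVP (\ref{BVPDtNN1})-(\ref{BVPDtNN4}) and the full-plane multiple-DtN BVP (\ref{BVPDtN1})-(\ref{BVPDtN6}), whose unique solvability is already available and is invoked in the statement. The single tool driving everything is the reflection $\sigma(x,y)=(x,-y)$ about $\Gamma$, together with the symmetry facts already established: $u^s$ is even in $y$ (Lemma \ref{Lemma1}), $v(x,y)=\tilde v(x,-y)$ (Lemma \ref{LemmaSymmetry}), the induced boundary identity $\tilde v(R,\tilde\theta)=v(R,-\theta)$ of (\ref{ImagD}), and the symmetry operator $S$. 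Since $\sigma$ is an isometry fixing $\Gamma$ and interchanging $(\Psi,\mathbf b,r)$ with $(\tilde\Psi,\tilde{\mathbf b},\tilde r)$, it preserves the Helmholtz equation and the Sommerfeld condition and carries $\mathcal B,\Omega_{\text{int}},\mathcal C$ onto $\tilde{\mathcal B},\tilde\Omega_{\text{int}},\tilde{\mathcal C}$; this covariance is what makes both directions work.

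\textbf{Existence.} I would take the unique triple $(u^s,v,\tilde v)$ solving (\ref{BVPDtN1})-(\ref{BVPDtN6}) and set $u:=u^s|_{\Omega_{\text{int}}}$ and $w:=v|_{\mathcal B}$. Then (\ref{BVPDtNN1}) and (\ref{BVPDtNN2}) are immediate restrictions of (\ref{BVPDtN1}) and (\ref{BVPDtN2}). The crux is to recognize, via (\ref{ImagD}) and the definition of $S$, that $\tilde v|_{\tilde{\mathcal B}}=S[w]$: indeed $S[w](R,\tilde\theta)=w(R,-\theta)=v(R,-\theta)=\tilde v(R,\tilde\theta)$. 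Since the operators $\tilde P$ and $\tilde T$ act only through the boundary values $\tilde v|_{\tilde{\mathcal B}}$, substituting $\tilde v|_{\tilde{\mathcal B}}=S[w]$ and $v|_{\mathcal B}=w$ into the conditions (\ref{BVPDtN3}) and (\ref{BVPDtN5}) on $\mathcal B$ turns them into $u=w+\tilde PS[w]$ and $\partial_r u=M[w]+\tilde TS[w]$, that is, into (\ref{BVPDtNN3}) and (\ref{BVPDtNN4}). Hence $(u,w)$ solves the half-plane BVP and already realizes the claimed identification.

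\textbf{Uniqueness.} For the converse I would begin with an \emph{arbitrary} solution $(u,w)$ of (\ref{BVPDtNN1})-(\ref{BVPDtNN4}) and reconstruct a full-plane triple by even reflection: let $V$ be the outgoing field in $\Psi$ determined by $w$ through the expansion (\ref{vexp1}), put $\tilde V:=V\circ\sigma$ on $\tilde\Psi$, and set $U^s:=u$ on $\Omega_{\text{int}}$ and $U^s:=u\circ\sigma$ on $\tilde\Omega_{\text{int}}$. By construction $U^s$ is even in $y$ and $\tilde V|_{\tilde{\mathcal B}}=S[w]$. I would then check that $(U^s,V,\tilde V)$ satisfies every equation of (\ref{BVPDtN1})-(\ref{BVPDtN6}): the Helmholtz equation and the data on $\mathcal C\cup\tilde{\mathcal C}$ follow from those on $\Omega_{\text{int}}$ and $\mathcal C$ by reflection (using that $u^i+u^r$ is even in $y$ for the rigid $\Gamma$, so the $\tilde{\mathcal C}$-data matches); the conditions (\ref{BVPDtN3}) and (\ref{BVPDtN5}) on $\mathcal B$ are exactly the half-plane conditions (\ref{BVPDtNN3}) and (\ref{BVPDtNN4}) read through $\tilde V|_{\tilde{\mathcal B}}=S[w]$; and the conditions (\ref{BVPDtN4}) and (\ref{BVPDtN6}) on $\tilde{\mathcal B}$ arise as the $\sigma$-reflections of those on $\mathcal B$. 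Concretely, for $\mathbf q=\sigma\mathbf p\in\tilde{\mathcal B}$ one gets $U^s(\mathbf q)=U^s(\mathbf p)=V(\mathbf p)+\tilde P[\tilde V](\mathbf p)=\tilde V(\mathbf q)+P[V](\mathbf q)$, since $\tilde P[\tilde V]=\tilde V|_{\mathcal B}=(V\circ\sigma)|_{\mathcal B}=(P[V])\circ\sigma$, and similarly for $T,\tilde T$. Once $(U^s,V,\tilde V)$ is shown to solve the full-plane BVP, the uniqueness of (\ref{BVPDtN1})-(\ref{BVPDtN6}) forces $(U^s,V,\tilde V)=(u^s,v,\tilde v)$, whence $u=u^s|_{\Omega_{\text{int}}}$ and $w=v|_{\mathcal B}$; this delivers both uniqueness and the identification at once.

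\textbf{Main obstacle.} I expect the delicate step to be verifying the two conditions on $\tilde{\mathcal B}$, i.e.\ establishing that the propagation and transfer operators are covariant under $\sigma$. For $\tilde P,P$ this is the field-level identity $\tilde P[\tilde V]=(P[V])\circ\sigma$ used above; for $M,\tilde M,T,\tilde T$ one must additionally track that $\sigma$ sends the $\tilde r$-derivative at $\mathbf q$ to the $r$-derivative at $\sigma\mathbf q$ (because $r\circ\sigma=\tilde r$), so the normal-derivative conditions reflect without a spurious sign. A careful but routine inspection of the geometric relations (\ref{rel1})-(\ref{rel2}) and of the defining series (\ref{defT})-(\ref{defTt}) confirms this covariance; the only genuinely case-dependent point is the parity of $u^i+u^r$, which is even in $y$ for the rigid $\Gamma$ treated here (and would be odd, with a compensating sign in the reflection, for a soft $\Gamma$).
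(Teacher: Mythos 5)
Your proposal is correct and follows essentially the same route as the paper: existence by restricting the full-plane triple and substituting $\tilde v|_{\tilde{\mathcal B}}=S[w]$ into (\ref{BVPDtN3}) and (\ref{BVPDtN5}), then uniqueness by reconstructing a full-plane triple through even reflection and invoking the unique solvability of (\ref{BVPDtN1})-(\ref{BVPDtN6}) to force the identification $U^{s}\equiv u^{s}$, $V\equiv v$, $\tilde V\equiv\tilde v$. The only cosmetic difference is in the key covariance step on $\tilde{\mathcal B}$: you argue at the field level that $\tilde P[\tilde V]=(P[V])\circ\sigma$ because both operators evaluate the same outgoing extension, whereas the paper verifies the identical identity by substituting the geometric relations $\tilde r(-\theta)=r(\tilde\theta)$ and $\tilde\theta(-\theta)=-\theta(\tilde\theta)$ directly into the defining series.
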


\begin{proof}
It is clear that if the triple $(u^{s},v,\tilde{v})$ satisfies the BVP (\ref{BVPDtN1})-(\ref{BVPDtN6}), then $u^s$ satisfies
(\ref{BVPDtNN1})-(\ref{BVPDtNN2}). Moreover according to Lemma \ref{LemmaSymmetry}, $\tilde{v}(R,\tilde{\theta})=v(R,-\theta)$ or in other words $\tilde{v} = S[v]$. Thus by making this substitution into boundary conditions (\ref{BVPDtN3}) and (\ref{BVPDtN5}), they reduce to boundary conditions (\ref{BVPDtNN3}) and (\ref{BVPDtNN4}). Therefore,
if $(u^{s},v,\tilde{v})$  satisfies (\ref{BVPDtN3}) and (\ref{BVPDtN5}) , then $(u^{s},v)$ satisfies (\ref{BVPDtNN3}) and (\ref{BVPDtNN4}), respectively. Then, existence of solutions for the half-plane bounded problem is proven.

We will now prove, that a solution $(u,w)$ of (\ref{BVPDtNN1})-(\ref{BVPDtNN4}) can be extended to the unique solution of the BVP (\ref{BVPDtN1})-(\ref{BVPDtN6}). For this purpose, let us define
\begin{eqnarray}
&& U^{s}(x,y) := \left\{
                               \begin{array}{ll}
                                 u(x,y), & \hbox{if $(x,y) \in \clo{\Omega}_{\text{int}}$;} \\
                                 u(x,-y), & \hbox{if $(x,y) \in \clo{\tilde{\Omega}}_{\text{int}}$.}
                               \end{array}
                             \right. \label{CandidU}
\end{eqnarray}
and
\begin{eqnarray}
&& V(x,y) := w(x,y), \qquad (x,y) \in \mathcal{B}, \label{CandidV1} \\
&& \tilde{V}(x,y) := w(x,-y), \qquad (x,y) \in \tilde{\mathcal{B}}. \label{CandidV2}
\end{eqnarray}
Notice that $U^{s}$ satisfies the Helmholtz equation in $\Omega_{\text{int}} \cup \tilde{\Omega}_{\text{int}}$ because $u$ does so in $\Omega_{\text{int}}$. Additionally, $U^{s}$ satisfies the physical boundary condition on $\mathcal{C} \cup \tilde{\mathcal{C}}$ because of (\ref{BVPDtNN2}) and the fact that $u^{i}+u^{r}$ is symmetric about $\Gamma$.

The only part of the proof left is to check that the
full-plane multiple-DtN boundary conditions (\ref{BVPDtN3})-(\ref{BVPDtN6}) are satisfied by the triple $(U^{s},V,\tilde{V)}$.
Let us first consider the satisfaction of the boundary condition (\ref{BVPDtN3}). If $(R,\theta)\in\mathcal{B}$ then the definitions (\ref{CandidU})-(\ref{CandidV1}) and boundary condition (\ref{BVPDtNN3}) lead to
\begin{equation}
U^s(R,\theta) = u(R,\theta) = w(R,\theta) + \tilde{P}S[w](\theta) = V(R,\theta) + \tilde{P}S[w](\theta),\notag
\end{equation}
Thus, it is enough to show that $S[w]=\tilde{V}$, but this immediately follows from the definitions of $S$ and $\tilde{V}$ in (\ref{CandidV2}). The proof for the satisfaction
of boundary condition (\ref{BVPDtN5}) from boundary condition (\ref{BVPDtNN4}) is completely analogous.

For the verification of (\ref{BVPDtN4}) observe that if $(R,\tilde\theta)\in {\tilde{\mathcal{B}}}$ then, $(R,-\theta)\in {\mathcal{B}}$. Therefore, using boundary condition (\ref{BVPDtNN3}) and definitions (\ref{CandidU}) and  (\ref{CandidV2}) yields
\begin{equation}
U^s(R,\tilde{\theta})=u(R,-\theta)=w(R,-\theta)+{\tilde P}S[w](-\theta)={\tilde V}(R,\tilde\theta)+{\tilde P}S[w](-\theta).
\end{equation}
Thus, boundary condition (\ref{BVPDtN4}) would be verified if ${\tilde P}S[w](-\theta)=
 {P}[{V}](\tilde\theta)$ for $(R,\tilde\theta)\in {\tilde{\mathcal{B}}}$.
In fact,
\begin{equation}
{\tilde P}S[w](-\theta)=
\frac{1}{2 \pi} \sum_{n=0}^{\infty} \epsilon_{n} \frac{H_{n}(k \tilde{r}(-\theta))}{H_{n}(kR)} \int_{0}^{2 \pi} w(R,-{\theta}') \cos n(\tilde{\theta}(-\theta) - {\theta}') d{\theta}'.\label{last}
\end{equation}
Then, using definition (\ref{CandidV1})
and substituting the geometrical identities:
$\tilde{r}(-\theta)=r(\tilde{\theta})$ and ${\tilde{\theta}}(-\theta)=-\theta(\tilde\theta)$ into (\ref{last}), we arrive to the sought equality. The verification of boundary condition (\ref{BVPDtN6}) is completely analogous to (\ref{BVPDtN4}).

Hence, the triple $(U^{s},V,\tilde{V)}$ is in fact the unique solution to the multiple-DtN problem (\ref{BVPDtN1})-(\ref{BVPDtN6}). This means that $U^{s} \equiv u^{s}$, $V \equiv v$ and $\tilde{V} \equiv \tilde{v}$.
In summary, we have shown that the reduced problem (\ref{BVPDtNN1})-(\ref{BVPDtNN4}) has at least one solution. Furthermore, any solution of it can be extended to a solution of the multiple-DtN problem (\ref{BVPDtN1})-(\ref{BVPDtN6}), which in turn possesses a unique solution as shown in \cite[Thm. 2]{Grote01}. This forces the reduced problem to have exactly one solution $(u,w)$ such that $u$ coincides with the restriction of $u^{s}$ to $\Omega_{\text{int}}$, and $w$ coincides with $v$ on $\mathcal{B}$. This concludes the proof.
\end{proof}

The above results  can be easily modified to account for an acoustically soft plane boundary $\Gamma$. In fact, if the Dirichlet condition is imposed on $\Gamma$, the scattered field $u$ is anti-symmetric about $\Gamma$ so that the purely outgoing wave fields satisfy $v(x,y) = -\tilde{v}(x,-y)$ for $(x,y) \in \clo{\Omega}_{\text{ext}}$. Thus, the half-plane multiple-DtN  scattering problem in this case is given by
\begin{eqnarray}
&& \Delta u + k^2 u = 0 \qquad  \text{in $\Omega_{\text{int}}$}, \label{BVPDtND1} \\
&& u = -u^{i} - u^{r} \qquad\qquad \text{on $\mathcal{C}$,} \label{BVPDtND2} \\
&& u = w - {\tilde P}S[w] \qquad\qquad \text{on $\mathcal{B}$,} \label{BVPDtND3} \\
&& \partial_{r} u = M[w] - {\tilde T}S [w]
\qquad \text{on $\mathcal{B}$.} \label{BVPDtND4}
\end{eqnarray}

\section{The far-field pattern} \label{FFPSection}
By considering the asymptotic expansion of the scattered field at infinity, it is possible to know more about the scattered energy in the far-field. For a polar coordinate system $(\rho,\vartheta)$ with the origin at $(0,0),$ it is well-known that this expansion for a full-plane multiple scattering problem is given by
\begin{equation}
u^s(\rho,\vartheta) = \frac{e^{ik\rho}}{\sqrt{k\rho}}u_{\infty}(\vartheta)+\mathcal{O}(\rho^{-3/2}).
\label{lot}
\end{equation}
The angular dependent coefficient $u_{\infty}$ of the leading order term of this expansion is known as the far-field pattern. It represents how the intensity of the scattered energy  varies according to all possible $\vartheta$ directions.
As shown in \cite[Thm. 3]{Grote01} for the two-obstacle scattering problem in the full-plane, the far-field pattern can be explicitly obtained in terms of the outgoing wave fields $v$ and $\tilde{v}$. In fact, for two obstacles located symmetrically about $\Gamma$,
\begin{eqnarray}
u_{\infty}(\vartheta) = F[v](\vartheta) + \tilde{F}[\tilde{v}](\vartheta),\qquad \mbox{for $\vartheta\in [0,2\pi]$}  \label{FFP-Full}
\end{eqnarray}
where the operators $F$ and $\tilde{F}$ acting on $v$ and $\tilde{v}$, respectively, are given by
\begin{eqnarray}
&& F[v](\vartheta) = \frac{1-i}{2 \pi \sqrt{\pi}} e^{- i k (b_x \cos \vartheta + b_y \sin \vartheta)} \sum_{n=0}^{\infty} \epsilon_{n} \frac{(-i)^n }{H_{n}(k R)} \int_{0}^{2 \pi} v(R,\theta') \cos n(\vartheta - \theta') d \theta', \label{defF} \\
&& \tilde{F}[\tilde{v}](\vartheta) = \frac{1-i}{2 \pi \sqrt{\pi}} e^{- i k (b_x \cos \vartheta - b_{y} \sin \vartheta)} \sum_{n=0}^{\infty} \epsilon_{n} \frac{  (-i)^n}{H_{n}(k R)} \int_{0}^{2 \pi} \tilde{v}(R,\tilde{\theta}') \cos n(\vartheta - \tilde{\theta}') d \tilde{\theta}'. \label{defFt}
\end{eqnarray}
Recall that $\textbf{b}=(b_{x},b_{y})$ is the center of the circle $\mathcal{B},$ and $(b_{x},-b_{y})$ is its image about the axis $\Gamma$.
The formula (\ref{FFP-Full}) can be adapted to the half-plane problem by using the symmetry relation (\ref{ImagD}) between the outgoing wave fields $v$ and $\tilde{v}$. If the Neumann condition (\ref{BVPsc3}) is imposed on
 $\Gamma,$ then the far-field pattern for the half-plane problem can be written as
\begin{equation}
u_{\infty}(\vartheta) = F[v](\vartheta) + \tilde{F} S[v](\vartheta),\label{ff1}
\end{equation}
where again $S[v](R,\tilde\theta) = v(R,-\theta),$ for $\tilde\theta=\theta$. If instead the Dirichlet condition is assumed on $\Gamma,$ then the far-field pattern for the half-plane scattering problem can be written as $u_{\infty}(\vartheta) = F[v](\vartheta) - \tilde{F} S [v](\vartheta)$.

\section{Extension to multiple obstacles in the half-plane} \label{MultiObst}
In this section we re-formulate the reduced BVP (\ref{BVPDtNN1})-(\ref{BVPDtNN4}) to account for the presence of several obstacles in the upper half-plane. This reformulation is based on the extended version of Theorem \ref{TheoremDecomp} (Decomposition Theorem), as given in \cite[Thm. 2]{JCP2010}, and also on the straightforward generalization of Lemma \ref{LemmaSymmetry} and Theorem \ref{TheoremEquiv} to multiple scatterers.

To start, consider $J$ obstacles whose physical boundaries are denoted by $\mathcal{C}_{j}$ for $j=1,2,...,J$. Each one of the scatterers is enclosed by an artificial circular boundary $\mathcal{B}_{j}$ of radius $R_{j}$. The region exterior to this artificial boundary is denoted by $\Psi_{j}$, and the circles $\mathcal{B}_{j}$ are mutually disjoint. The immediate vicinity $\Omega_{j}$ of each obstacle is the region bounded internally by $\mathcal{C}_{j}$ and externally by $\mathcal{B}_{j}$. Hence, the computational domain is $\Omega_{\text{int}} = \cup_{j=1}^{J} \Omega_{j}$. As in Section \ref{SHCond}, the symbols  $\tilde{\mathcal{B}}_{j}$ and $\tilde{\Psi}_{j}$ denote the images about $\Gamma$ of the artificial boundary $\mathcal{B}_{j}$ and the unbounded domain $\Psi_{j}$, respectively. Finally, $\Omega_{\text{ext}} = \cap_{j=1}^{J} (\Psi_{j} \cap \tilde{\Psi}_{j})$ denotes the remaining unbounded exterior domain in the full-plane.
A polar coordinate system $(r_{j},\theta_{j})$ is defined in each of the regions $\clo{\Psi}_{j}$ and another polar coordinate system $(\tilde{r}_{j},\tilde{\theta}_{j})$ is defined in each of the regions $\clo{\tilde{\Psi}}_{j}$.

According to the extended version of Theorem \ref{TheoremDecomp},
the scattered field $u^{s}$ is uniquely decomposed into $2 J$ purely outgoing wave fields, in the exterior region $\Omega_{\text{ext}}$, i.e.,
\begin{eqnarray}
u^{s} = \sum_{j=1}^{J} v_{j} + \sum_{j=1}^{J} \tilde{v}_{j}, \quad \text{in $\clo{\Omega}_{\text{ext}}$}. \notag
\end{eqnarray}
Here, $\clo{\Psi}_{j}$ and $\clo{\tilde{\Psi}}_{j}$ are the domains of definition for $v_{j}$ and $\tilde{v}_{j}$, respectively.

Following the derivation of the previous sections, it can be easily proved that the  half-plane multiple scattering problem for hard plane boundaries is equivalent to the following {\it half-plane multiple-DtN  multiple scattering BVP}:
\begin{eqnarray}
&& \Delta u^{s} + k^2 u^{s} = 0 \qquad  \text{in $\Omega_{\text{int}}$}, \label{BVPDtNNM1} \\
&& u^{s} = -u^{i} - u^{r} \qquad\qquad \text{on $\mathcal{C}_{j}$\quad for $j=1,2,...,J$,} \label{BVPDtNNM2} \\
&& u^{s} = v_{j} + \sum_{l \neq j}^{J} P[v_{l}] + \sum_{l = 1}^{J} {\tilde P}S[v_{l}] \qquad\qquad \text{on $\mathcal{B}_{j}$ \quad for $j=1,2,...,J$,} \label{BVPDtNNM3} \\
&& \partial_{r} u^{s} = M[v_{j}] + \sum_{l \neq j}^{J} T[v_{l}] + \sum_{l = 1}^{J} {\tilde T}S [v_{l}]
\qquad \text{on $\mathcal{B}_{j}$ \quad for $j=1,2,...,J$.} \label{BVPDtNNM4}
\end{eqnarray}
Here, the operator $M[v_{j}](\theta_{j})$ is given by (\ref{defM}) with $v$ replaced by $v_{j}$, $\theta$ by $\theta_{j}$ and $R$ by $R_{j}$.
Similarly, $\tilde{P}S[v_{l}](\theta_j)$ and $\tilde {T}S[v_{l}](\theta_j)$ are given by (\ref{defPt}) and (\ref{defTt}) with $\tilde v(R,\tilde\theta)$ replaced by $S[v_{l}](R_l,\tilde\theta_l)=
v_l(R_l,-\theta_l)$, $\theta$ by $\theta_{j}$, $R$ by $R_l$, $\tilde{r}$ by $\tilde{r}_{l}$, and $\tilde{\theta}$ by $\tilde{\theta}_{l}$. The operators  $P[v_{l}](\theta_j$) and $T[v_{l}](\theta_j)$ are given by (\ref{defP}) and (\ref{defT}) with $\tilde v$ replaced by $v_{l}$, $\theta$ by $\theta_j$, $\tilde{\theta}$ by $\theta_{l}$, $R$ by $R_{l}$ and $\tilde r$ by $r_{l}$.

The far-field pattern for this multiple-obstacle configuration in the half-plane can be obtained from the straightforward generalization of the formula (\ref{ff1}) previously derived in Section \ref{FFPSection}, which is given by
\begin{eqnarray}
u_{\infty}(\vartheta) = \sum_{j=1}^{J} F[v_{j}](\vartheta) + \sum_{j=1}^{J} \tilde{F} S [v_{j}](\vartheta), \label{ffm}
\end{eqnarray}
where $F[v_{j}](\vartheta)$ is given by (\ref{defF}) with $v$ replaced by $v_{j}$,
$\textbf{b}$ by $\textbf{b}_{j}$, and $R$ by $R_{j}$. Similarly, $\tilde{F} S [v_{j}](\vartheta)$ is given by (\ref{defFt}) with $\tilde{v}$ replaced by $S [v_{j}]$,
$\tilde{\textbf{b}}$ by $\tilde{\textbf{b}}_{j}$, and $R$ by $R_{j}$.

\section{Numerical method} \label{GridGenFDMSection}
As mentioned in the introduction, the BVP (\ref{BVPDtNNM1})-(\ref{BVPDtNNM4}) defined in the relatively small bounded region
$\Omega_{\text{int}} = \cup_{j=1}^{J} \Omega_{j}$ is numerically solved
using a finite difference method which closely follows the technique introduced in \cite{JCP2010}.

Each sub-domain $\Omega_j$ is an annular
region with arbitrary piecewise smooth inner boundary
$\mathcal{C}_j$ and circular outer boundary $\mathcal{B}_j$. Local
boundary-conforming curvilinear coordinate systems are defined for
each sub-domain $\clo{\Omega}_{j}$. Thus, the coordinate lines
conform to the corresponding obstacle bounding curve
$\mathcal{C}_{j}$ and to the outer circle
$\mathcal{B}_{j}$.
The coordinates of a point in the bounded sub-domain
$\clo{\Omega}_{j}$ can be denoted as
$$\textbf{x}(\xi^{j},\eta^{j}) =
(x(\xi^{j},\eta^{j}),y(\xi^{j},\eta^{j})), \quad \mbox{for}\quad j=1\dots J.$$ These local systems of coordinates are
independent from each other. Thus, from now and on, we will ignore the superscript ``$j$" and will continue our description as if it were only one scatterer.

Our approach requires robust
supporting grids for the accuracy of the approximate solutions.
Therefore, there is a need to obtain appropriate grids for
scatterers of complexly shaped geometry. As done in previous work \cite{ViAcMATCOM,JCAM2009,JCP2010}, the approach adopted here consists of generating structured elliptic grids
numerically. This is accomplished through a transformation $T$ which establishes a relationship between generalized coordinates $(\xi,\eta)$ in a computational rectangular domain and the cartesian coordinates $(x,y)$ in a complexly shaped physical region.

A common practice in elliptic grid generation is
to implicitly define the transformation $T$ as the numerical
solution to a Dirichlet boundary value problem governed by quasi-linear elliptic system of partial differential equations. Excellent descriptions of this technique can be found in \cite{Steinberg,Handbook}. In particular in \cite{JCP2010}, we introduced the following system,
\begin{eqnarray}
&& \alpha x_{\xi \xi }-2\beta x_{\xi \eta}+\gamma x_{\eta \eta } +
\frac{1}{2} \alpha_{\xi}x_{\xi} + \frac{1}{2}
\gamma_{\eta}x_{\eta} = 0, \label{Elliptic1} \\
&& \alpha y_{\xi \xi }-2\beta y_{\xi \eta}+\gamma y_{\eta \eta } +
\frac{1}{2} \alpha_{\xi}y_{\xi} + \frac{1}{2} \gamma_{\eta}y_{\eta} = 0.
\label{Elliptic2}
\end{eqnarray}
where $\alpha = x_{\eta}^{2} + y_{\eta}^2$, $\beta = x_{\xi} x_{\eta} + y_{\xi} y_{\eta} $ and $\gamma = x_{\xi}^{2} + y_{\xi}^2$.
After imposing Dirichlet boundary conditions on the boundaries of the annular regions $\Omega_{\text{int}}$, the resulting BVP for this quasi-linear elliptic system is numerically solved and a grid is generated.
As shown in \cite{JCP2010} for annular regions with circular boundaries, the well-known polar coordinates, $x(\xi,\eta) = \eta \cos \xi$ and $y(\xi,\eta) = \eta \sin \xi$, exactly solve the elliptic system (\ref{Elliptic1})-(\ref{Elliptic2}).
A natural expectation is that grids numerically generated from
(\ref{Elliptic1})-(\ref{Elliptic2}) for arbitrarily shaped annular
regions preserve the good properties that polar grids have for
circular domains. Indeed, these grids are smooth, non-self-overlapping, and enjoy nearly uniform distribution of grid points as shown in Fig. \ref{Grid}.

\begin{figure}[!ht]
\begin{center}
\includegraphics[width=0.8\textwidth]{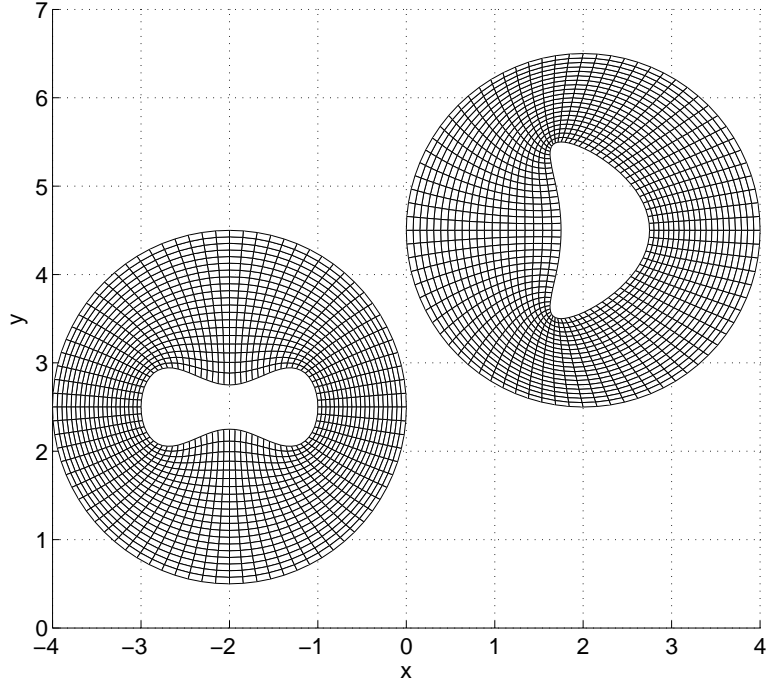}
\caption{Elliptic grids for a complexly shaped scatterers.}
\label{Grid}
\end{center}
\end{figure}

Before employing the elliptic grids in the numerical solution of the BVP (\ref{BVPDtNNM1})-(\ref{BVPDtNNM4}), the governing equations need to be written in terms of the new coordinates obtained form the grid generation process. For instance,
the Helmholtz equation in the \emph{elliptic} coordinates
generated by the system (\ref{Elliptic1})-(\ref{Elliptic2}) is given by
\begin{eqnarray}
\frac{1}{J^2} \Bigg[\alpha u_{\xi \xi
}-2\beta u_{\xi \eta}+\gamma u_{\eta \eta } +
\frac{1}{2}
\Big(\alpha_{\xi}\, u_{\xi}+\gamma_{\eta}\, u_{\eta}\Big)\Bigg]
+ k^2 u = 0, \label{HelmElliptic}
\end{eqnarray}
where $J= x_{\xi} y_{\eta} - x_{\eta} y_{\xi}$ is the jacobian of the transformation $T$.

The finite difference method employed in this work is based on a second-order discretization of all the derivatives present in both the grid generation system (\ref{Elliptic1})-(\ref{Elliptic2}) and in the BVP (\ref{BVPDtNNM1})-(\ref{BVPDtNNM4}) written in elliptic coordinates.
Then, numerical solutions are obtained inside the relatively small annular region enclosing the obstacle and excluding the plane boundary $\Gamma$. This has a major advantage. The number of grid points is greatly reduced compared with those obtained from enclosing the scatterer and a portion of the axis $\Gamma$ with a single large artificial boundary. Also, the grid generation process is much simpler because it is independently performed over small regions with a single obstacle. This is particularly true for the structured grids. Finite difference methods are attractive for the simulation of wave
phenomena due to their simplicity. Their application on complex geometries is facilitated by the construction of smooth boundary-conforming grids. For a detailed implementation of the proposed numerical technique the reader is referred to \cite{JCP2010}. Since the half-plane multiple-DtN boundary condition can be easily incorporated into the variational formulation of BVP (\ref{BVPDtNNM1})-(\ref{BVPDtNNM4}), the finite element numerical method can be naturally combined with this nonreflecting condition, as previously done in \cite{Keller01,Grote-Keller01,Givoli1999,Grote01}.

\section{Numerical examples} \label{NumericalExamples}
In this section we present some numerical results obtained from the application of the numerical method described in the previous section. The approximations were obtained by
truncating the series of the multiple-DtN operators $M$, $P$, $\tilde{P}$, $T$ and $\tilde{T}$ at $N = 30$ terms. To preserve the uniqueness of the solution, this value of $N$ was carefully chosen to satisfy the condition $N\ge \max (kR_l),$ for $l=1,\dots J$, as indicated in \cite{Grote01}.

Here our numerical method is validated by comparing its numerical approximations against the exact solution for one and for two circular obstacles embedded in the upper half-plane with the Neumann condition (\ref{BVPsc3}) imposed on $\Gamma$. The analytical solutions of these scattering problems and their far-field patterns can be obtained using eigenfunction expansions, as described in \cite[Chapter 4]{MartinBook} for two and four circular obstacles symmetrically located about $\Gamma$.

\subsection{One and two circular obstacles} \label{CircularObs}
First, we consider a single acoustically soft circular obstacle of radius $a=1$ with center at $\textbf{c} = (0,2)$. The artificial boundary $\mathcal{B}$ is a circle
of radius $R = 1.5$ with center at $\textbf{b} = \textbf{c}$. The direction of incidence is $\textbf{d}=(1,0)$, and the wavenumber $k=\pi$. Using the proposed technique, the far-field pattern for this problem $F_{\text{num}}$ is numerically computed.
The exact far-field pattern $F_{\text{exact}}$ is also obtained using eigenfunction expansions, as described in \cite[Chapter 4]{MartinBook}, for two cylinders located symmetrically about $\Gamma$. Figure \ref{FFPError1} displays the relative difference between $F_{\text{num}}(\vartheta)$ and $ F_{\text{exact}}(\vartheta)$ as a function of the angle $\vartheta \in [0,\pi]$. To obtain this plot, we employed a grid of size $200 \times 20$ to discretize the vicinity of the obstacle. With this moderate grid size, the numerical far-field pattern shows excellent agreement with the exact one with a relative error less than $0.17 \%$.

\begin{figure}[!ht]
\begin{center}
\includegraphics[width=0.8\textwidth]{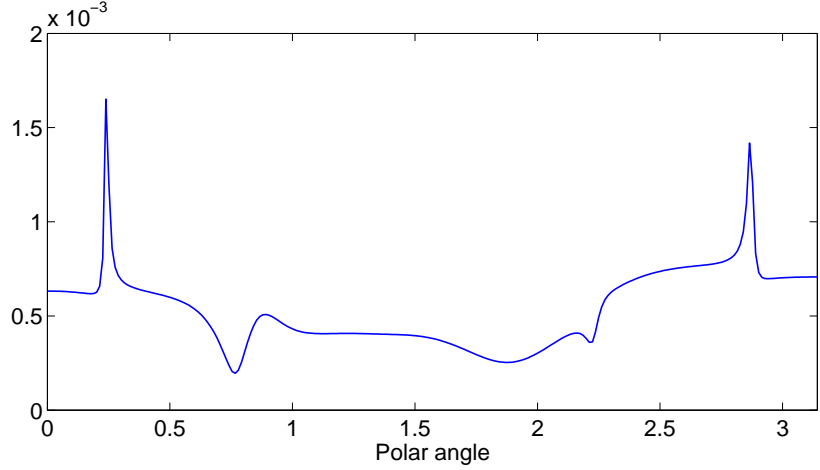}
\caption{Relative error of the numerical far-field pattern as a function of the polar angle for one circular obstacle embedded in the half-plane.}
\label{FFPError1}
\end{center}
\end{figure}

The other benchmark problem considered consists of the scattering from two acoustically soft circular obstacles located in the upper-half plane. The first obstacle is a circle of radius $a_{1}=1$ with center at $\textbf{c}_{1} = (-2,3)$ and the second has radius $a_{2}=1$ and center at $\textbf{c}_{2} = (2,2)$. The artificial boundaries $\mathcal{B}_{1}$ and $\mathcal{B}_{2}$ are circles of radius $R_{1} = R_{2} = 1.5$ with centers at $\textbf{b}_{1} = \textbf{c}_{1}$ and $\textbf{b}_{2} = \textbf{c}_{2}$, respectively. Here again, the direction of incidence is $\textbf{d} = (1,0)$ and the wavenumber $k=\pi$. Figure \ref{FFPError2} displays the relative difference between the exact and numerical far-field patterns for the scattering from two circular obstacle embedded in the upper half-plane. The plot was obtained for the grid size $200 \times 20$. Again even for this moderate grid size, excellent agreement between the approximate and the exact far-field patterns is observed with a relative error below $0.84 \%$.

\begin{figure}[!ht]
\begin{center}
\includegraphics[width=0.8\textwidth]{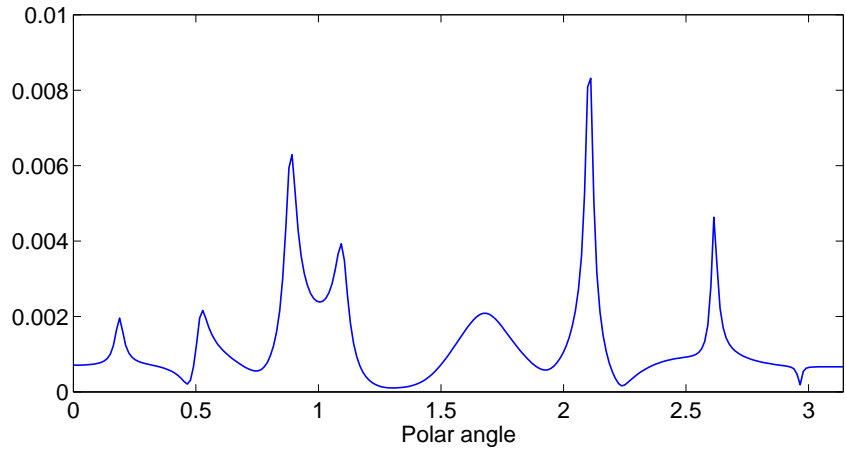}
\caption{Relative error of the numerical far-field pattern as a function of the polar angle for two circular obstacles embedded in the half-plane.}
\label{FFPError2}
\end{center}
\end{figure}

In order to verify the order of convergence in both the one-obstacle and the two-obstacle problems, a sequence of numerical
tests was made for increasingly finer grids conforming to each
circular obstacle. The second order convergence of the FDM is easily verified from the results shown in Table \ref{Table1}.

\begin{table}[!htb]
\small
\begin{center}
\caption{Maximal relative error for the numerical far-field patterns.}
\label{Table1}
\begin{tabular}{c c c c}
\hline
Grid Size    & One Circle & Two Circles  \\
\hline
$100\times10$ & $6.63\times 10^{-3}$ & $3.33\times 10^{-2}$  \\
$150\times15$ & $2.94\times 10^{-3}$ & $1.48\times 10^{-2}$  \\
$200\times20$ & $1.65\times 10^{-3}$ & $8.32\times 10^{-3}$  \\
$250\times25$ & $1.06\times 10^{-3}$ & $5.32\times 10^{-3}$  \\
$300\times30$ & $7.33\times 10^{-4}$ & $3.69\times 10^{-3}$  \\
\hline
\end{tabular}
\end{center}
\end{table}

\subsection{Complexly shaped obstacles} \label{ComplexObs}
To show the capability of our technique to deal with obstacles of arbitrary shape, we present some results for complexly shaped obstacles. This experiment considers two obstacles embedded in the upper half-plane. They are shaped like a \emph{peanut} and a \emph{kite} whose parametric equations are given below. The direction of incidence is $\textbf{d}=(\frac{1}{\sqrt{2}},-\frac{1}{\sqrt{2}})$ and wavenumber $k=\pi$. This particular example illustrates the multiple scattering interaction as the waves bounce back and forth between the two obstacles and the plane boundary $\Gamma$. The results from three experiments are shown in Figure \ref{ComplexScatt}. The one on the top left corresponds to the Neumann condition imposed on the plane boundary $\Gamma$, whereas the plot on the top right corresponds to the Dirichlet condition on $\Gamma$. As seen in Figure \ref{ComplexScatt}, the artificial boundaries $\mathcal{B}_{1}$ and $\mathcal{B}_{2}$ are circles of radius $R_{1} = R_{2} = 2.0$. The nonlocal DtN condition allows us to take a smaller radius bringing the artificial boundaries arbitrarily close to the obstacles. However, these plots help us visualize the wave fields and the influence from the plane boundary $\Gamma$ and the type of boundary condition imposed thereon. In fact, in order to better appreciate the effect of correctly including the plane boundary $\Gamma$ in the problem, we also display the scattering event in the absence of this infinite boundary as shown in the third plot (bottom) of Figure \ref{ComplexScatt}.
\begin{itemize}
\item[-]{Peanut: $x(t) = \frac{1}{11}(10+ \cos 2t) \cos t - 2$, $~y(t) = \frac{1}{16}(10+6 \cos 2t) \sin t + \frac{5}{2}$, $~t \in [0,2\pi]$.}
\item[-]{Kite: $x(t) = \frac{1}{4}(2 \cos t + \cos 2t) \cos t + 2$, $~y(t) = \sin t + \frac{9}{2}$, $~t \in [0,2\pi]$.}
\end{itemize}

\begin{figure}[]
\begin{center}
\includegraphics[width=0.5\textwidth]{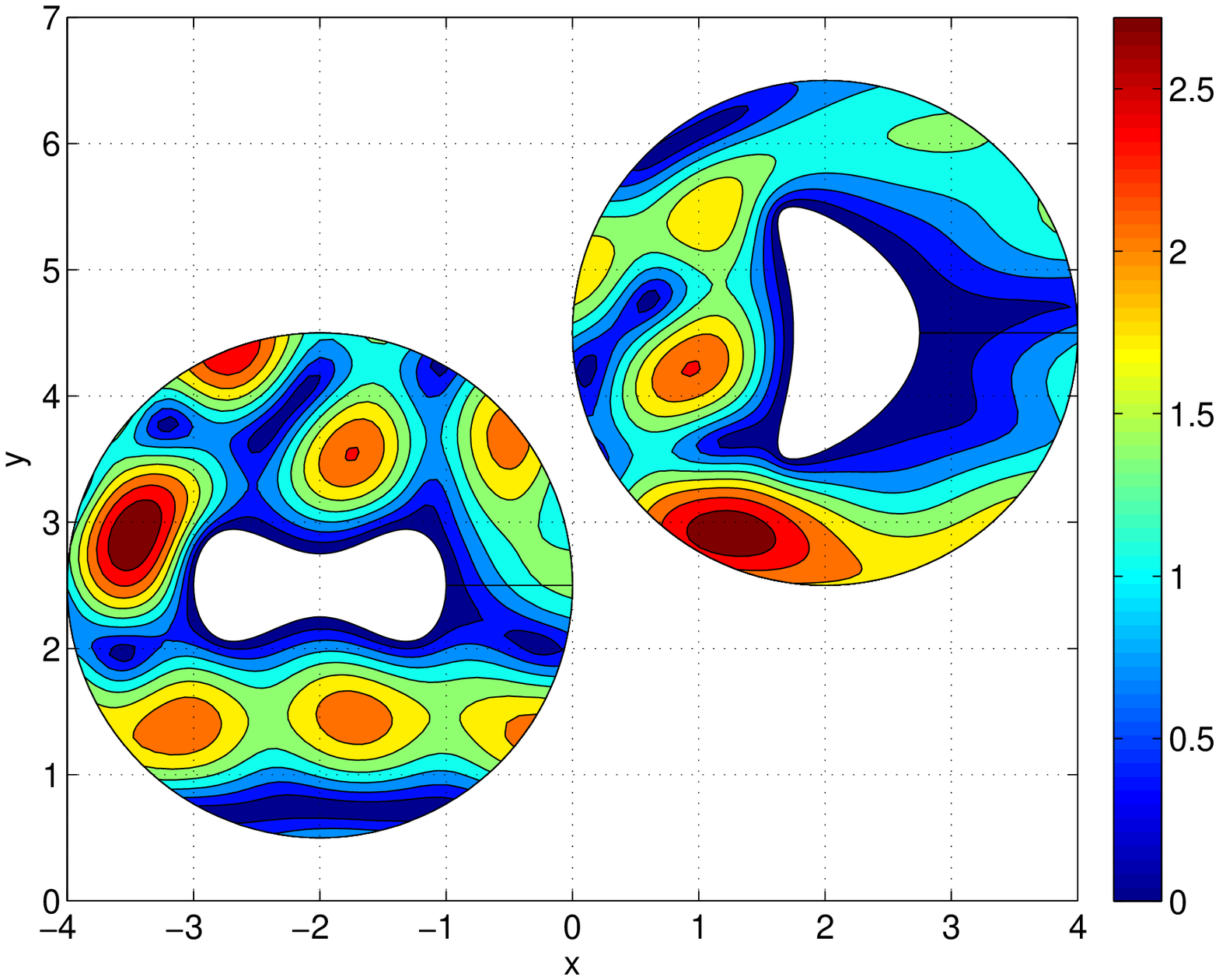} \hspace{-0.2in}
\includegraphics[width=0.5\textwidth]{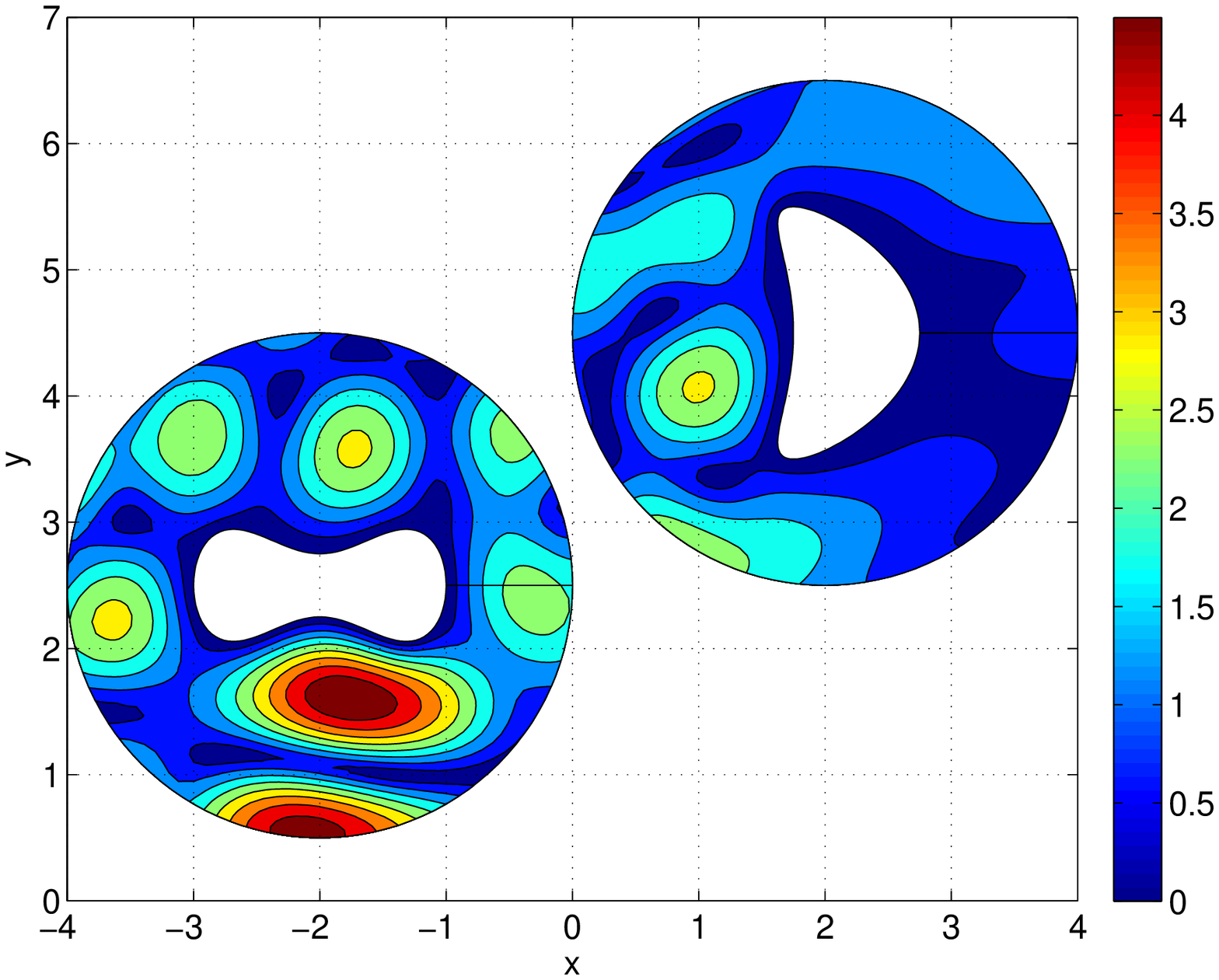} \vspace{-0.2in}
\includegraphics[width=0.5\textwidth]{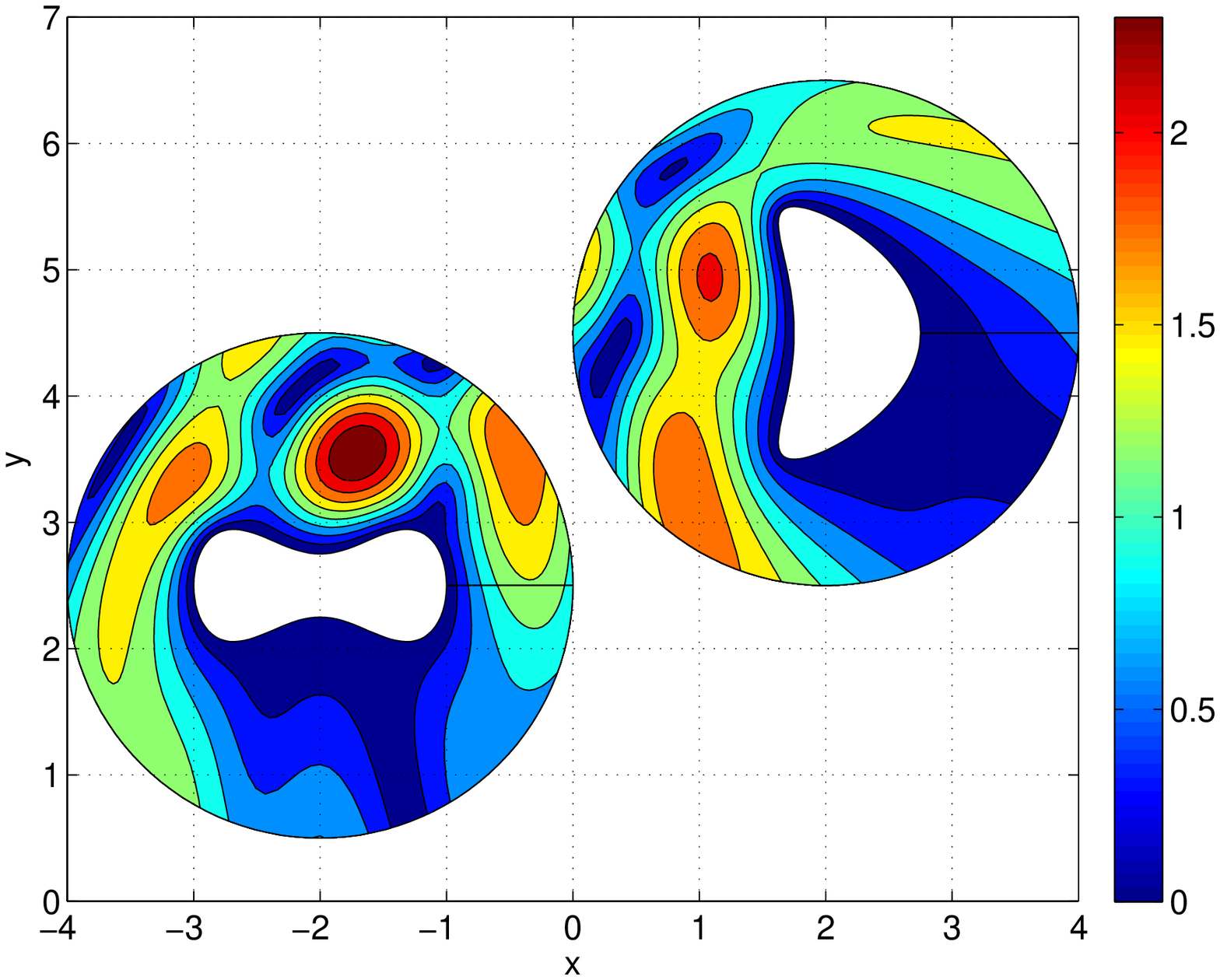}
\caption{Amplitude of the total field. The plots on the top correspond to the scattering in the presence of the plane boundary $\Gamma$ with a Neumann condition (top left) and Dirichlet condition (top right), respectively. The third plot (bottom) displays the scattering event in the absence of the plane boundary $\Gamma$, i.e., in the full-plane.}
\label{ComplexScatt}
\end{center}
\end{figure}


\section{Conclusions and future work}
A novel DtN map for scattering from a single or multiple obstacles embedded in the half-plane has been constructed. The boundary condition resulting from this map, defined by (\ref{BVPDtNN3})-(\ref{BVPDtNN4}) for a single obstacle or by (\ref{BVPDtNNM3})-(\ref{BVPDtNNM4}) for multiple obstacles, has been called the \textit{half-plane multiple-DtN boundary condition}. This condition simultaneously deals with the unboundedness of the media and the presence of a plane boundary $\Gamma$ which can be either acoustically hard or soft. The new DtN map is derived combining the full-plane multiple-DtN map and the method of images.

The application to half-plane multiple scattering problems consists of enclosing each obstacle with a circular artificial boundary on which the half-plane DtN boundary condition is imposed. This condition not only allows waves to leave the computational sub-domains without spurious reflections, but it also communicates the field data from one sub-domain to another. In addition, the half-plane multiple-DtN boundary condition successfully deals with the multiple scattering interactions which are unavoidably present in wave propagation problems in semi-infinite media even if only one obstacle is considered.

Since the scattered field in the common exterior region $\Omega_{\text{ext}},$ to all circles  $\mathcal{B}_j,$ has an analytical representation, the artificial boundaries can be placed arbitrarily close to the obstacles. There are several advantages in doing this. First, the size of each one of the sub-domains $\Omega_j$ where the numerical solution is calculated is relatively small and each of the $\Omega_j$ does not include
any portion of the plane boundary $\Gamma$. Secondly, simulation for moderately high values of the wavenumber $k$ can be performed. This would be much more computationally expensive using a single semi-circular artificial boundary as is currently done.

The proposed method is especially suited for the simulation of multiple scattering interactions by distant obstacles embedded in semi-infinite media. For instance, it should have a significant impact on active sonar where sound interaction between two or more submarines inside the ocean takes place \cite{UrickBook1983,Schenck1986}. This is due to the fact that the distance between the submarines might be several orders of magnitude larger than the length of the submarines. A similar situation is encountered in the simulation of acoustic and electromagnetic wave interaction between one or more airplanes and the ground.

Finally, we offer the following brief description of possible improvements and extensions of the present work. Some of them are clearly feasible but others may require a good deal of research to be corroborated. 
\begin{enumerate}
\item{The formulation of the half-plane multiple-DtN boundary condition can also be generalized to three space dimensions. In fact, Lemmas \ref{Lemma1}-\ref{LemmaSymmetry} and Theorems \ref{TheoremDecomp}-\ref{TheoremEquiv} are easily extendable to their three-dimensional counterparts. However, the expressions (\ref{defP})-(\ref{defTt}) defining the multiple-DtN operators $M$, $P$ and $T$ become more involved and harder to implement. The derivation of this multiple-DtN condition can also be generalized to elliptically shaped artificial boundaries \cite{Grote-Keller01}. This will provide more flexibility in the selection of the computational region to be discretized.}

\item{It should also be possible to combine semi-circular and circular artificial boundaries as needed. For instance, if one of the several obstacles is too close to the plane boundary $\Gamma$ to fit an enclosing circle, then a semi-circular artificial boundary would be preferred. A DtN condition  \cite{GivoliBook,GivVig1993} may be imposed on the semi-circle. Since an analytic expression for the purely outgoing field is available then appropriate multiple-DtN operators $M$, $P$ and $T$ may be formulated.}

\item{The localization of the multiple-DtN operators. In order to alliviate the numerical burden associated with a nonlocal boundary condition, some approaches have been devised to localize the single-DtN operator $M$ \cite{GivoliBook,Givoli1999,GivPat98}. The goal would be to obtain similar localizations of the other multiple-DtN operators $P$ and $T$. This localization should lead to sparse matrices from the FEM and FDM for multiple scattering problems in the full- and the half-plane.}

\item{The derivation of analogous half-space multiple-DtN boundary conditions for time dependent wave problems. This extension seems to be accomplishable since full-space time dependent multiple-DtN conditions have already been developed \cite{Grote02,Grote03}.}

\item{The appropriate extension to other linear partial differential equations. The two main ingredients of this work are the full-space multiple-DtN condition and the method of images. Hence, in principle, the half-space multiple-DtN condition may be formulated for BVPs where the above ingredients are known to hold. This is the case for the Laplace equation governing physical phenomena such as electrostatics, potential flow, steady heat transfer, etc. For Maxwell equations, exact nonreflecting conditions were formulated in \cite{GroKell98,Grote06} for single scattering, and the method of images holds for electrodynamics in the presence of a perfectly conducting plane boundary \cite{Low2011}.}

\item{Other boundary conditions on the plane boundary $\Gamma$. The method of images is naturally associated with symmetric or anti-symmetric boundary conditions such as the homogeneous Dirichlet or Neumann conditions. Unfortunately, very important conditions do not lead to such symmetries. Some of these are the traction-free condition in elasticity, and the impedance condition in acoustics or electrodynamics. However, some effort has been made to construct appropriate image methods for these boundary conditions. See \cite{Taral05,Chandler10,MaLin01} and references therein. At this point, it is not clear whether the referred image methods are compatible with the formulation of the DtN condition.}
\end{enumerate}




\bibliographystyle{elsarticle-num}
\bibliography{AcoBib}

\end{document}